\documentclass[journal,10pt]{IEEEtran}
\usepackage{amsmath,amssymb,amsthm,bm}
\usepackage{booktabs,array,tabularx,graphicx}
\usepackage{algorithm,algpseudocode}
\usepackage{tikz}
\usetikzlibrary{arrows.meta,positioning}
\usepackage{cite}
\usepackage{xurl}
\usepackage[hidelinks,unicode]{hyperref}
\graphicspath{{figures/}}
\newcommand{\us}{\,\mu\mathrm{s}}
\newcommand{\ms}{\,\mathrm{ms}}
\newcommand{\E}{\mathbb{E}}
\newcommand{\Prob}{\mathbb{P}}
\newcommand{\range}{\operatorname{range}}

\newtheorem{proposition}{Proposition}
\newcolumntype{Y}{>{\raggedright\arraybackslash}X}
\begin{document}
\title{Timing and Reliability Limits of Position-Assisted UWB Multicast Audio}
\author{Fengyun~Zhang,~\IEEEmembership{Member,~IEEE},~Member,~AES%
\thanks{Fengyun Zhang is with the College of Artificial Intelligence, Southwest University, Chongqing, China (e-mail: zhangfy2019@mail.sustech.edu.cn).}}
\markboth{IEEE Transactions on Multimedia --- Manuscript}%
{Zhang: Timing and Reliability Limits of Position-Assisted UWB Multicast Audio}
\maketitle
\begin{abstract}
One-to-many ultra-wideband (UWB) transport can disseminate shared audio blocks without replicating a downlink stream for every receiver. Yet common reception does not ensure aligned electrical playback, coincident acoustic arrival, or delivery to an entire receiver group. This paper develops a unified feasibility analysis of position-assisted UWB multicast audio that separates clock and actuation error, age-dependent acoustic propagation compensation, correlated block erasures, and reserved communication resources. We derive a conditional bound on one-listener arrival spread, a stationary expression for all-receiver block delivery under common and receiver-local losses, and complementary bounds for reliability-constrained admission. Reproducible Monte Carlo experiments examine 4,590 policy--run records across clock, localization, channel, and slot configurations. For eight receivers, affine clock correction produces a mean run-level 95th-percentile electronic span of $21.19\us$, while position assistance reduces the corresponding one-listener acoustic arrival span from $21.95\ms$ to $419.43\us$. Under the stated $30\us$ timing criterion, 98.86\% of compensated events still exceed the threshold. For 64 receivers, three block copies attain an all-receiver delivery probability of approximately 0.97478 under the modeled correlated channel. The analysis exposes the distinct failure mechanisms of position age, shared fades, retransmission limits, and schedule saturation, and establishes joint criteria for interpreting synchronous multicast audio performance.
\end{abstract}
\begin{IEEEkeywords}
ultra-wideband, multicast audio, acoustic synchronization, broadcast reliability, correlated erasures.
\end{IEEEkeywords}
\section{Introduction}\label{sec:intro}
\IEEEPARstart{D}{istributed} audio systems increasingly deliver common source material to multiple independent playback devices. Paired listening devices, multi-speaker installations, localized interpretation systems, and assistive listening services differ in their presentation requirements, yet share a central engineering difficulty: identical content must reach the appropriate output interface at the appropriate time. A common received packet does not specify the corresponding sample index, fix receiver clock drift, determine the digital-to-analog conversion instant, or eliminate the difference in sound-propagation paths. For spatially distributed loudspeakers, a system may therefore achieve high packet delivery while producing substantial inter-device temporal mismatch or unequal acoustic arrival at the intended listener. A rigorous service model must distinguish these quantities before evaluating a wireless transport architecture.

Standardized Bluetooth Low Energy (LE) Audio provides an important reference for this problem. Connected Isochronous Streams support addressed links, whereas Broadcast Isochronous Streams within a Broadcast Isochronous Group provide common one-to-many transport; Auracast builds an interoperable broadcast-audio service on that mechanism~\cite{bhalla,bison,auracast}. Broadcast distribution itself is consequently not new, nor is reduced duplication a unique property of ultra-wideband (UWB). UWB is of interest because it offers an alternative short-range physical and scheduling platform for applications that may combine fine-grained timing control, spatial information, and local broadcast dissemination. Whether such capabilities translate into acoustic synchronization and reliable group playback depends on the receiver implementation and the constraints imposed by the content deadline. This distinction prevents a radio-level capability from being mistaken for an end-to-end audio guarantee.

Three interacting challenges make the multicast-audio problem more stringent than conventional packet delivery. First, receivers own distinct oscillator, buffering, sample-rate, and actuation domains. Clock-offset compensation alone cannot guarantee simultaneous electrical output when timestamp observations contain bias or the audio samples are committed at different device-specific instants. Second, a geographically distributed speaker group has unequal distances to a selected listener. Compensating the resulting acoustic path delays requires estimates whose error depends both on coordinate uncertainty and on the age of the most recent position update. Improving clock synchronization without accounting for this geometric term can yield a small electronic spread while leaving a much larger acoustic spread. Third, a broadcast block is useful for group rendering only when every designated receiver obtains it before the applicable deadline. Receiver-local erasures, burst-correlated common losses, retransmission opportunities, and feedback scheduling jointly determine this event. Average per-receiver delivery probability and all-receiver delivery probability are not interchangeable.

These challenges are also coupled through limited time and communication resources. A repeated broadcast can increase the chance of block delivery, but every scheduled copy occupies service capacity and potentially reduces the time available for decoding and output scheduling. Receiver-specific acknowledgments or channel reports add a return-path reservation whose cost increases with group size. Position updates can decrease path-compensation error while consuming acquisition and processing time; an old but precise coordinate can be less useful than a recent, noisier observation during motion. Consequently, performance cannot be established by optimizing a clock residual, packet-delivery ratio, or downlink airtime in isolation. The operative question is whether the relevant timing, reliability, and capacity constraints have a nonempty common feasible region.

The problem considered in this paper comprises one UWB audio source, $N$ independently clocked receivers, application-level copies of each common audio block, and an external source of position information for the speaker-to-listener geometry. Each source block has a presentation time and is subject to a playback deadline. The output requirement is expressed using the spread of electrical emission times or, for spatial playback, the spread of direct-path arrivals at one designated listening position. Reliability is defined by the joint event that all designated receivers recover the same block, rather than by the mean delivery ratio over devices. A reserved-slot utilization constraint accounts for downlink copies, periodic timing service, and any receiver-specific feedback. These definitions establish comparable quantities for a feasibility analysis while making clear that alignment at one listener is not equivalent to coherent reproduction over an entire room.

Prior investigations support individual elements of this architecture but do not resolve their interaction under the same service objective. UWB musical-haptic communications have been demonstrated and evaluated on dedicated hardware by Turchet \emph{et al.}~\cite{turchet}, while Zhang has studied scheduling and clock correction in a UWB stereo-audio architecture~\cite{zhang}. Separate studies address UWB time synchronization, integrated communication and localization, and audio sample-rate correction~\cite{hollosi,liu,korse}. Their significance is specific to the observation models, radio interfaces, and output paths examined in those works. The present question is different: how timing errors, position-estimation age, state-correlated erasures, and service reservations constrain an entire multicast receiver population under an explicitly declared acoustic objective. This requires a common accounting framework rather than an assumed equivalence between radio timestamps, application callbacks, and loudspeaker output.

An additional methodological difficulty is the distinction between a documented wireless feature and the observable variables required by a mathematical system model. Addressed and broadcast connections can be supported by a software interface without exposing a calibrated antenna-edge timestamp or a sample-addressable digital-to-analog converter deadline. Similarly, the existence of an audio sample-rate adjustment mechanism does not imply common absolute playback phase across devices. We therefore ground the architecture in an audit of the available software and scheduling documentation, then formulate clock, position, and channel quantities as explicit model inputs. The analysis concerns a conditional operating region: every numerical statement is tied to its observation definition, distributional assumptions, and scheduling boundary, rather than being interpreted as a hardware specification.

A concrete timing criterion makes these distinctions visible. For a sinusoidal signal of frequency $f$, an output mismatch $\Delta t$ induces phase displacement $2\pi f\Delta t$. A $30\us$ separation corresponds to $10.8^\circ$ at 1~kHz and $108^\circ$ at 10~kHz. These values are direct phase calculations, not a psychoacoustic audibility threshold. We employ $30\us$ as an explicit numerical sensitivity criterion, allowing clock, actuator, and position errors to be related to a common target. Because the geometric correction is applied to a designated listening position, its direct-path benefit must also be interpreted spatially: a correction that equalizes arrivals at one point need not do so at other points. The value of the framework lies in making such design assumptions and tradeoffs mathematically inspectable.

\noindent\textbf{Contributions:} This paper makes three contributions. \emph{1) Joint timing and service formulation:} We establish a common mapping from source presentation times to receiver clock estimates, scheduled electrical emissions, and one-listener acoustic arrivals; a conditional error bound separates clock residual, output quantization, coordinate uncertainty, and information age, while a causal deadline inequality and reserved-load expression expose independent feasibility restrictions. \emph{2) Correlated group-delivery analysis:} For a two-state common-loss process with heterogeneous receiver-local erasures, we give the stationary all-receiver delivery expression, a necessary failure floor, and a sufficient union-bound test, thereby distinguishing individual reception from reliable simultaneous service and exposing when retransmission caps preclude a group target. \emph{3) Reproducible sensitivity evaluation:} We connect the documented broadcast and scheduling interfaces to common-seed comparisons of clock policies, position-error regimes, fixed and adaptive repetitions, receiver-count scaling, and slot overload; run-level statistics and explicit parameter provenance isolate the failure mechanisms relevant to multicast audio design.

The results show that low electronic timestamp dispersion can coexist with much greater position-dependent acoustic spread, while improved marginal reception can fail to deliver a common block across a large receiver group. In addition, increasing redundancy can move the system outside its reserved-slot capacity even when the underlying loss model favors more copies. Together, these observations turn familiar component-level improvements into falsifiable system-level feasibility statements: a multicast configuration is admissible only when its output-time, group-delivery, and resource requirements are simultaneously satisfied.

The remainder of the paper is organized as follows. Section~\ref{sec:related} reviews broadcast-audio transport and related timing and localization studies. Section~\ref{sec:platform} examines the network architecture and documented platform interfaces. Section~\ref{sec:model} introduces the system model and analytical bounds, and Section~\ref{sec:methods} presents the reference policies. Section~\ref{sec:evaluation} specifies the numerical protocol and statistical estimands; Section~\ref{sec:results} reports the results. Section~\ref{sec:discussion_main} discusses cross-layer implications and model scope, and Section~\ref{sec:conclusion} concludes the paper.

\section{Background and Related Work}\label{sec:related}
\subsection{Search scope and evidence hierarchy}
A targeted narrative literature review was completed on 17~October~2026 using Crossref, Google Scholar, arXiv metadata, and available publisher, author, institutional, and official standards pages. Queries combined UWB audio, LE Audio isochronous transmission, BIS reliability, sample-rate offsets, distributed acoustic synchronization, and integrated communication/localization. The strict recent window was 17~October~2021--17~October~2026; older references were retained only for established clock, filtering, burst-channel, and broadcast foundations. The 2021 LE Audio book is background outside that strict window, not evidence of a recent discovery~\cite{bhalla}.

The saved search log contains 151 candidate records, including nonrelevant matches, duplicates, and an invalid entry. Candidate count is not a screened-study count and does not imply completeness. Twenty-one scholarly works and two official document sources were selected for focused comparison. Selection favored directly relevant primary work, identifiable authors and dates, stable DOI or repository URLs, and access to a concrete abstract or passage supporting the claim. Because database coverage, access to full texts, and ranking are incomplete, this targeted review does not claim systematic-review coverage. The absence of a result cannot substantiate a first-ever or prior-art-exclusion claim.

Evidence is separated into peer-reviewed published work, explicitly labeled preprints, official specifications or vendor documentation, executed simulation, and architecture-level modeling assumptions. Published issue metadata is preferred over an earlier online date; for example, the localization survey by Hapsari \emph{et al.} appeared online in 2024 but has volume-13 bibliographic assignment in 2025~\cite{hapsari}. The UWB audio work of Zhang and the ultrasound architecture of Laurijssen \emph{et al.} remain labeled preprints~\cite{zhang,laurijssen}. Source titles alone support only narrow landscape descriptions, not quantitative performance claims.

\subsection{LE Audio, broadcast reliability, and group interaction}
LE Audio establishes an essential prior-art baseline. CISes provide acknowledged, connection-oriented isochronous transport, whereas BISes provide unacknowledged synchronized one-to-many delivery. The Bluetooth Core Specification explicitly permits unconditional BIS subevent retransmissions and advance transmission of packets~\cite{btcore,bison,mantis}. A broadcast isochronous group may contain multiple BISes; therefore, the phrase ``a single BIS for stereo'' is not a universal requirement. The number of streams, channel mapping, subevents, and presentation timing depend on the configured Broadcast Isochronous Group. A comparison against BIS must therefore account for its permitted retransmission and pretransmission mechanisms, not compare UWB copies against an artificially unrepeated broadcast. Auracast is an interoperable broadcast implementation and branding framework associated with the Public Broadcast Profile, not a guarantee that every LE Audio device implements broadcast reception~\cite{auracast}.

Yan \emph{et al.} analyze CIS loss, throughput, backlog, latency, and average power, including multihop configurations~\cite{yan}. Their work demonstrates that reliable protocol-level comparison requires scheduling, feedback, buffering, and PHY parameters rather than a count of logical links. It also differs from the present reduced-order study: we do not reproduce their stack, calibrate a BLE baseline, or equate raw PCM offered load with LC3 traffic. Earlier work by Lee examines broadcast audio over interfered BLE channels~\cite{lee}, demonstrating that the reliability problem predates Auracast branding and that broadcast is not intrinsically loss-free.

Recent group-communication work further narrows novelty claims. MANTIS introduces coordinated broadcaster/receiver role transitions to support multiparty LE Audio while retaining a single active broadcaster~\cite{mantis}. This is a different objective from independent one-way playback, but it already recognizes that the return path and coordination mechanism determine scalability. BISON demonstrates security risks in BIS synchronization and broadcast-code handling on off-the-shelf hardware~\cite{bison}. Its results motivate authentication and metadata protection; they are not treated as proof that every current Auracast implementation is vulnerable in the same way, nor as proof that UWB is automatically more secure.

\subsection{UWB audio, timing, and integrated localization}
The closest experimental precedent is the musical-haptics work of Turchet \emph{et al.}~\cite{turchet}. It evaluates SPARK SR1020 communication and complete haptic prototypes, including alignment of instrument sound and tactile delivery. The article explicitly warns that a latency advantage inferred from a reported Auracast value requires a comparison under the same experimental conditions. We adopt that evidentiary caution: its measured values are not re-labeled as measurements of our SR1100-based proposal, and cross-paper latency numbers are not used as a superiority benchmark.

Zhang's 2025 preprint discusses high-resolution lossless UWB audio and reports a two-board stereo setup~\cite{zhang}. Its architecture includes clock-drift compensation, audio pipelines, synchronization frames, and scheduling. These features overlap substantially with any generic claim to combine UWB with audio streaming. The present study differs in its explicit multi-receiver error/resource budgets and negative feasibility outcomes, not in claiming to invent those building blocks. The preprint's broad commercial and performance statements are not imported as independently verified facts.

UWB-based clock synchronization and integrated localization are also established research topics. Holl\'osi and Moldov\'an evaluate wireless UWB synchronization of IEEE~1588 clocks and discuss implementation-specific error sources~\cite{hollosi}. Liu \emph{et al.} develop asynchronous integrated communication and localization using IR-UWB signals, including clock-drift estimation and downlink localization~\cite{liu}. Such work establishes that joint communication and localization is possible with suitable waveform processing and hardware. It does not establish that a supplied SPARK public API exposes those functions or that a packet audio application inherits their accuracy. Surveys emphasize platform constraints and environment-dependent localization behavior~\cite{savic,hapsari}.

\subsection{Sampling clocks, acoustic alignment, and the remaining question}
Network-clock synchronization and sampling-clock synchronization solve related but different problems. Reference-broadcast synchronization removes some sender-side uncertainty by comparing common reception events~\cite{elson}; general clock surveys distinguish offset, skew, observation delays, and implementation effects~\cite{sundar}. Two-state Kalman filtering is a conventional estimator~\cite{kalman}; moving it to a different bearer is not, in isolation, an estimator contribution.

Acoustic sensor-network studies make the rate issue concrete. Gburrek \emph{et al.} consider time-varying sampling-rate offsets, changing speakers, and distance-assisted sampling-time alignment~\cite{gburrek22}; related work explicitly addresses integrating rate synchronization and beamforming~\cite{gburrek23}. Didier \emph{et al.} compensate coherence drift and full-sample shifts in distributed estimation~\cite{didier}. On the playback side, Korse \emph{et al.} estimate and compensate offsets for stereo rendering and evaluate binaural effects~\cite{korse}. These are stronger precedents for signal-domain synchronization than packet arrival coincidence.

Laurijssen \emph{et al.} describe a preprint architecture that deliberately separates optical position calibration, a sub-GHz synchronization signal, and Wi-Fi payload transport~\cite{laurijssen}. This is a useful counterexample to the premise that one radio must inherently supply all three functions. It concerns ultrasound acquisition, not a SPARK loudspeaker system, and its reported sub-microsecond result is not adopted as ours. Table~\ref{tab:related} summarizes the evidence boundary: the unresolved task is platform-specific joint feasibility, not the first conceptual combination of wireless transport, clock correction, and position information.

\subsection{Position of this study and unresolved technical coupling}
The closest available studies address important portions of the problem but use different observables and evaluation endpoints. The UWB audio prototype literature assesses an implemented audio path; the distributed-acoustic literature addresses sample-rate or output alignment; LE Audio defines an actual broadcast transport; and common-fade channel models characterize correlated reception. These lines of work do not by themselves establish the feasibility of a particular externally positioned, independently clocked UWB multicast installation. The unresolved coupling is operational: an application must have a timely, calibrated observation, a causal playback actuator, sufficiently reliable delivery to the entire intended group, and a legal schedule for all repeated blocks and control exchanges.

We consequently avoid two unsupported novelty shortcuts. First, a system is not new solely because it combines established clock tracking, propagation compensation, and repeated transmission. Second, improvement against a deliberately weaker single-copy or independently addressed baseline is not evidence of superiority to mature BIS operation. The appropriate claim is conditional: given a stated timestamp model, position-age distribution, erasure process, and service reservation, one can compute attainable and unattainable operating regions. The numerical study then tests which modeled mechanism dominates the violation of those conditions. Accordingly, the present comparison is restricted to the same parameterized service model and does not rank complete commercial implementations.

The standard/prototype distinction affects the choice of evidence. Bluetooth Core specifications establish permitted and mandatory behavior but do not supply measured behavior of arbitrary Auracast products. Vendor documentation establishes an interface contract but cannot substitute for calibrated board traces. Simulation can test a stochastic model without proving that the model describes a particular installation. These distinctions are applied consistently to the capability table, the numerical study, and the stated evaluation metrics.

\begin{table*}[t]
\caption{Related-work evidence and distinctions. Published results belong to the cited platforms; no numerical cross-platform ranking is inferred.}\label{tab:related}
\centering\small
\begin{tabularx}{\textwidth}{p{24mm}p{27mm}YYY}
\toprule
Work / period & Primary scope & Relevant established capability & Important difference & Evidence status \\
\midrule
Yan \emph{et al.}, 2024~\cite{yan} & CIS performance model & Loss, throughput, backlog, delay, power tradeoffs & Not a UWB multicast or calibrated DAC comparison & Published journal; analytical/simulation \\
BISON, 2023; MANTIS, 2026~\cite{bison,mantis} & BIS security / multiparty audio & Synchronized broadcast, return-path and coordination issues & Neither makes UWB broadcast or authentication automatic & Published conference; hardware demonstrations \\
Turchet \emph{et al.}, 2025~\cite{turchet} & SR1020 musical haptics & Real link/end-to-end tests, sound--touch timing & Different radio/application; no inherited SR1100 multicast result & Published journal; measurements and perception \\
Zhang, 2025~\cite{zhang} & Lossless UWB audio & Stereo architecture, clock-drift and schedule discussion & Two-board study; generic audio-over-UWB novelty already occupied & Preprint; reported setup not reproduced here \\
Gburrek / Didier / Korse, 2022--2025~\cite{gburrek22,didier,korse} & Acoustic sampling offsets & Rate/time distinction; signal-domain correction & Not equivalent to application RX timestamp alignment & Published conference/journal; methods and evaluations \\
Holl\'osi, 2023; Liu, 2026~\cite{hollosi,liu} & UWB clocks / integrated localization & Hardware-dependent timing and joint estimation & Different APIs and waveform assumptions & Published journal; platform-specific validation \\
This study & Multi-receiver feasibility & Explicit timing, common-loss, and schedule budgets & No real hardware data, codec, or commercial-stack benchmark & Executed assumed-model simulation + SDK audit \\
\bottomrule
\end{tabularx}
\end{table*}

\section{Architecture and Platform Capability Audit}\label{sec:platform}
\subsection{Topologies and what a comparison must hold constant}
We distinguish an inter-receiver decode-and-forward topology, independently addressed source-to-receiver streams, standardized BIS broadcast, and application-level UWB PAN broadcast. The last two share the key structural advantage of common downlink dissemination. Calling UWB a ``fourth topology'' is useful only as an implementation category, not a proof of a new graph-theoretic paradigm. Table~\ref{tab:topology} compares structural properties without attributing unverifiable private mechanisms to Apple or Qualcomm. Product-specific simultaneous listening, snooping, dual connections, and retransmission procedures require model/version evidence before being used as baselines.

The relevant comparison variables are the source samples, audio bandwidth and channel count, codec or raw-PCM representation, presentation deadline, receiver geometry, interference conditions, and the boundaries of latency and energy accounting. If compression differs, radio airtime and decoding delay must be reported separately. If one system uses an external position tracker or extra timing radio, its cost must be included. Our comparison instead isolates abstract repetition and clock policies under a common model. It does not emulate Apple, Qualcomm, LC3, CIS link-layer procedures, or Auracast discovery and presentation delay.

\begin{table*}[t]
\caption{Structural topology comparison, not measured platform performance. ``Per-stream'' costs assume non-overlapping service without spatial reuse.}\label{tab:topology}
\centering\small
\begin{tabularx}{\textwidth}{p{29mm}YYYY}
\toprule
Category & Common content service & Return-path issue & Timing issue & Limitation of current comparison \\
\midrule
Inter-receiver forwarding & At least two hop services for a two-receiver chain & Per-hop feedback and forwarding policy & Additional processing/buffering; not necessarily twice total latency & Structural lower-bound counter only \\
Independent addressed streams / CIS & Separate link service; synchronized groups possible~\cite{yan,bison} & Per-link control and scheduling & Group timing does not alone calibrate all DAC paths & No commercial or standards-complete stack simulated \\
BIS / Auracast & Shared broadcast; one or multiple BISes in a BIG~\cite{bison,auracast} & No ordinary receiver ACK in the BIS & Isochronous rendering; product-specific audio pipeline & No empirical latency/energy baseline \\
Audited UWB PAN broadcast & One connection to destination \texttt{0xFF}~\cite{spark} & ACK off; application copies and separate reports & Output timing modeled separately & Feasibility model; not a qualified interoperable audio profile \\
\bottomrule
\end{tabularx}
\end{table*}

\subsection{Capabilities established by the supplied SDK}
The audited input was an SR1100 SDK v2.4.0 archive and its accompanying documentation package. Archive extraction was checked for path integrity, expected sizes, and CRC consistency. The online index identifies version~2.4.0, released 24~July~2026; three retrieved source chapters---wireless concepts, schedule design, and audio CDC---match the corresponding attachment hashes byte-for-byte. This supports the provenance of the examined documentation, not the behavior of every firmware configuration or a particular physical board. The accompanying capability ledger records file paths, relevant lines, and SHA-256 digests for the reviewed materials~\cite{spark}; the proprietary SDK itself is not redistributed with this paper.

The wireless concepts describe destination \texttt{0xFF} as PAN broadcast, with addresses \texttt{0x00}--\texttt{0xFE} for nodes: 255 addressable active nodes in total, not a measured 255-receiver reliability guarantee. The API documentation warns that broadcast ACKs collide because all receivers would reply in the same timeslot. The current header states that retransmission does nothing with ACK disabled. Thus, the repetitions modeled below are separately queued \emph{application copies}; they are not unacknowledged invocations of working SWC automatic-repeat request.

Payload documentation is internally inconsistent: generic API prose gives a maximum range of 118--124~bytes, while the SR1120 schedule table states 255~bytes and a profiler example uses 250~bytes. We do not resolve this by guessing a product limit. The model uses 112~SWC bytes, conservatively below either stated range; header and feature allocations are configuration-dependent. The example audio backend includes SAI DMA callbacks and a MAX98091 codec path for a specific reference platform. This fact does not establish that an unspecified board provides a sample-addressable absolute playback deadline, an equivalent codec clock, or a functioning multicast firmware image.

\subsection{Capabilities not established and the architecture boundary}
A legacy 2023 changelog mentions ranging callbacks and phase-data settings. The reviewed current \texttt{core/wireless/swc\_api.h} does not expose those entries or an identified calibrated antenna-edge timestamp interface. This is evidence of an unestablished public-API capability, not proof that the silicon is physically incapable of ranging. The system model therefore assumes a distinct external position-sensing subsystem. Its accuracy, update rate, latency, transport, energy, and coordinate calibration are explicit inputs rather than properties inferred from audio reception.

SDK clock-drift-compensation documentation describes software sample insertion/removal with interpolation, together with PLL adjustment where the board and codec clock topology permit it. PLL operation requires suitable codec-clock control; the documented EVK1.4 fixed-crystal arrangement cannot use that approach~\cite{spark}. Stable queue occupancy or average sample rate is still insufficient to prove common electrical output phase. A timing observation model, clock-domain conversion, and DAC/SAI actuation model enter the feasibility account as distinct conditions. Fig.~\ref{fig:architecture} separates documented broadcast data service from the observation, sensing, and output-actuation prerequisites of acoustic alignment.

\subsection{From documented interfaces to an executable timing path}
The reference timing path attaches a sample-block identifier and source-domain presentation time to each audio block, then maps the presentation time to the corresponding local-clock deadline. The source/receiver clock relation is a state-estimation problem; converting a corrected deadline into a particular SAI/DAC sample is a separate actuation problem. Receiver completion callbacks cannot silently replace calibrated physical receive events, because interrupt scheduling, buffering, and DMA boundaries create additional latency and jitter. Likewise, a software sample correction that regulates long-term queue occupancy need not align the absolute first output sample across independent boards.

The one-to-many service model includes state beyond the broadcast destination value. Its per-receiver logical state includes block and copy indices, duplicate suppression, bounded playout queues, and loss accounting that includes blocks absent from every callback. Application copies occupy scheduled slots even when all receivers already obtained an earlier copy. If feedback is introduced, a collision-free uplink and its airtime must be reserved; if location updates share the radio, their deadlines and state changes must also be accounted for. An address-space limit must never be interpreted as a demonstrated capacity or a reliable group-size bound.

These distinctions motivate the paper's evidence hierarchy. \emph{Documented} means that a capability appears in the reviewed source or technical manual. \emph{Modeled} means that an observation, delay, or stochastic error is explicitly supplied to the simulator. \emph{Measured} would require a specified board, build, instrument, calibration, and trace. The present analysis uses documented and modeled quantities. Thus, the confidence intervals pertain to the stated stochastic models and not to physical output timing.

\begin{figure*}[t]
\centering
\begin{tikzpicture}[>=Stealth,node distance=5mm and 6mm,
 box/.style={draw=black!65,rounded corners=2pt,align=center,font=\small,minimum height=11mm,text width=31mm,fill=black!2},
 conditional/.style={box,dashed,fill=white}]
\node[box] (source) {Capture + sample index\\One source};
\node[box,right=of source] (radio) {PAN broadcast\\\texttt{0xFF}; ACK off};
\node[box,right=of radio] (receivers) {Independent RXs\\Deduplicate + buffer};
\node[conditional,right=of receivers] (servo) {Calibrated timing\\Clock/rate servo};
\node[conditional,below=10mm of servo] (dac) {Timed DAC actuation\\\textbf{Unverified}};
\node[box,left=of dac] (location) {External localization\\Error + age budget};
\node[box,left=of location] (listener) {Acoustic arrival\\One target listener};
\draw[->] (source)--(radio);
\draw[->] (radio)--(receivers);
\draw[->] (receivers)--(servo);
\draw[->] (servo)--(dac);
\draw[->] (location)--(dac);
\draw[->] (dac.south)--++(0,-5mm)-|(listener.south);
\draw[->,dashed] (receivers.north)--++(0,7mm)-| node[above,pos=.25,font=\scriptsize]{Scheduled feedback; nonzero cost} (radio.north);
\end{tikzpicture}
\caption{Reference architecture for the conditional multicast-audio model. Solid blocks describe documented transport elements; dashed blocks identify observation and output-control variables represented at the model level. Position measurements come from an external subsystem, and the acoustic objective applies to one designated listener.}\label{fig:architecture}
\end{figure*}

\section{System Model and Analytical Bounds}\label{sec:model}
\subsection{Clock dynamics, units, and observations}
Let $t$ be source time in seconds. Receiver $i$ has local clock
\begin{equation}
 C_i(t)=t+10^{-6}b_i(t),
\end{equation}
where offset $b_i$ is in microseconds and skew $s_i=\dot b_i$ is in microseconds per second, numerically parts per million. At beacon epochs separated by $\Delta$, the reference estimator uses
\begin{align}
 \bm x_{i,k+1}&=\bm F\bm x_{i,k}+\bm w_{i,k},&
 \bm F&=\begin{bmatrix}1&\Delta\\0&1\end{bmatrix},\\
 z_{i,k}&=\bm H\bm x_{i,k}+\beta_i+v_{i,k},&
 \bm H&=\begin{bmatrix}1&0\end{bmatrix},\label{eq:observe}
\end{align}
with $\bm x=[b,s]^T$, fixed observation bias $\beta_i$, and zero-mean timing noise $v$. A calibrated observation is assumed available when a beacon succeeds. It is not identified with a UART host timestamp or an uncalibrated software callback. Equation~\eqref{eq:observe} contains two parameters that contribute additively to the observation: the unknown offset and the fixed one-way timestamp bias. They are not separately identifiable from one-way observations alone; a calibration constraint, reciprocal observation, or independent time reference is required~\cite{sundar,hollosi}.

For a skew random-walk intensity $q_s$, the filter process covariance is
\begin{equation}
 \bm Q=q_s\begin{bmatrix}\Delta^3/3&\Delta^2/2\\\Delta^2/2&\Delta\end{bmatrix}.
\end{equation}
The simulated truth integrates the previous skew over each step and then applies a Gaussian skew increment with standard deviation $\sigma_s\sqrt{\Delta}$. The filter covariance corresponds to an integrated continuous-time approximation; this small deliberate mismatch is disclosed rather than described as a calibrated oscillator model. Bias is retained in the simulated observation, so reducing random measurement noise does not eliminate it. Initial offsets, skews, and per-node biases are independently drawn within the stated ranges.

\subsection{One-listener geometry and causal playback}
Let $\bm p_i$ denote speaker position and $\bm u(t)$ one target listener, both in metres. Acoustic distance is $d_i=\|\bm p_i-\bm u\|$, with estimate $\hat d_i$. The sound speed is $c=343$~m/s in the model; this is distinct from electromagnetic propagation speed used in UWB ranging. For capture time $\tau_k$, fixed buffer $B$, and design bound $D\geq\max_i d_i$, a desired emission time is
\begin{equation}
 t^*_{i,k}=\tau_k+B+\frac{D-\hat d_i}{c}.\label{eq:emit}
\end{equation}
For $C_i=t+10^{-6}b_i$, programming a local deadline $t^*+10^{-6}\hat b_i$ yields first-order firing error $10^{-6}(\hat b_i-b_i)$; reversing this sign would be physically inconsistent. With actuator phase error $q_i$ in microseconds, define
\begin{equation}
 e_i=\hat b_i-b_i+q_i,\qquad
 A_i=\tau_k+B+D/c+10^{-6}e_i+(d_i-\hat d_i)/c.\label{eq:arrival}
\end{equation}
The arrival span is $S_A=\max_i A_i-\min_i A_i$. Deliberately different emission delays may reduce $S_A$; therefore, equal emission timestamps are not always the correct acoustic objective.

Equation~\eqref{eq:emit} is a feasible instruction only if the complete audio block has arrived and the receiver can commit a playback action before the intended emission time. In particular, $B+(D-\hat d_i)/c$ must exceed source-to-queue completion, decoding, clock conversion, and scheduling/actuation delay. The clock simulation is an event-time error model: distances and the most recent sensor estimate are evaluated at the modeled render event, not through a complete future-deadline/audio-buffer implementation. The actuation horizon is represented by prediction or by an additional motion-age term in the bound. The reported scheduled-arrival quantity therefore describes a nominal time allocation in the event model, not observed capture-to-DAC or capture-to-sound latency. With $D=12$~m and $B=10\ms$, the intended common-arrival guard alone is about $44.99\ms$, so no ultra-low acoustic latency is claimed.

\subsection{A conditional acoustic error budget}
\begin{proposition}[Conditional arrival-span budget]\label{prop:budget}
Suppose the clock-estimation error is bounded by $|\hat b_i-b_i|\leq E_c$ in microseconds, $|q_i|\leq q_D/2$, speaker-coordinate error is at most $\delta_p$, listener-estimate error at its timestamp at most $\delta_u$, speed at most $v$, and total effective position age at actuation at most $a$. Then
\begin{equation}
 S_A\leq 2\cdot10^{-6}E_c+10^{-6}q_D+
 \frac{2(\delta_p+\delta_u+va)}{c}.\label{eq:bound}
\end{equation}
\end{proposition}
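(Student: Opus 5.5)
The plan is to start from the explicit arrival expression \eqref{eq:arrival}. Every receiver's arrival shares the same deterministic term $\tau_k+B+D/c$, so only the receiver-dependent part matters for the span. Writing $A_i=\tau_k+B+D/c+r_i$ with $r_i=10^{-6}e_i+(d_i-\hat d_i)/c$, we get $S_A=\max_i r_i-\min_i r_i\leq 2\max_i|r_i|$. The argument therefore reduces to a uniform bound $|r_i|\leq R$, after which $S_A\leq 2R$. We will split $r_i$ into a timing part and a geometric part and bound each with the triangle inequality.

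\emph{Timing part.} By hypothesis $|\hat b_i-b_i|\leq E_c$ and $|q_i|\leq q_D/2$, so $10^{-6}|e_i|\leq 10^{-6}(E_c+q_D/2)$. Doubling this gives $2\cdot10^{-6}E_c+10^{-6}q_D$, which matches the first two terms of \eqref{eq:bound} exactly. This confirms that the factor-two structure of the bound comes from the generic $\max-\min\leq 2\max|\cdot|$ step.

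\emph{Geometric part.} Here we need $|d_i-\hat d_i|\leq \delta_p+\delta_u+va$. We will interpret $\hat d_i=\|\hat{\bm p}_i-\hat{\bm u}\|$, where $\hat{\bm p}_i$ has error at most $\delta_p$ and $\hat{\bm u}$ is the listener estimate taken at its own timestamp. The true listener position at actuation, $\bm u(t_{\mathrm{act}})$, differs from the true position at that timestamp by at most $va$, because the speed is at most $v$ and the effective age is at most $a$. This step integrates $\|\dot{\bm u}\|\leq v$ over the age interval. The reverse triangle inequality $\bigl|\|\bm x\|-\|\bm y\|\bigr|\leq\|\bm x-\bm y\|$ then gives $|d_i-\hat d_i|\leq\|\bm p_i-\hat{\bm p}_i\|+\|\bm u(t_{\mathrm{act}})-\hat{\bm u}\|\leq\delta_p+\delta_u+va$. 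Dividing by $c$ and doubling yields the last term of \eqref{eq:bound}.

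The main obstacle is making the geometric step rigorous with respect to \emph{which} times and positions are involved. We must state precisely that $d_i$ is evaluated at the actuation instant while $\hat d_i$ uses the stale estimate, and that the "total effective position age" $a$ bounds the elapsed time for both the listener and any speaker motion. If speakers can also move, their displacement should be absorbed either into $\delta_p$ or into the same $va$ term. We will resolve this by treating speakers as static, or as having coordinate errors already referenced to actuation time, so that all motion enters through the single listener-age term. We will also note that the bound is first-order: it inherits the first-order firing-error linearization stated before \eqref{eq:arrival}, so it is exact only within that model.
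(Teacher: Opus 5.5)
Your proposal is correct and follows the paper's proof. Both cancel the common term $\tau_k+B+D/c$, bound each distance error by $\delta_p+\delta_u+va$ via the reverse triangle inequality together with the bounded-speed motion over the position age, and double the per-receiver bounds to obtain the span; the paper sums the separate ranges of the clock, actuator and distance terms, whereas you bound the total residual by $\max_i r_i-\min_i r_i\leq 2\max_i|r_i|$, which gives the identical estimate.
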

\begin{proof}
The reverse triangle inequality bounds each distance-estimation error by $\delta_p+\delta_u+va$. In~\eqref{eq:arrival}, every common term cancels under pairwise differences. The range of clock errors is at most $2E_c$, that of the actuator errors at most $q_D$, and that of acoustic distance errors at most twice the individual bound. Summing these ranges and converting microseconds to seconds gives~\eqref{eq:bound}.
\end{proof}

Proposition~\ref{prop:budget} is a sufficient conditional inequality: it does not assert that its bound is tight, that the noise has bounded support, or that any physical clock satisfies the assumed limit. Gaussian sensor/noise variables in the simulation are unbounded and hence do not automatically satisfy its deterministic assumptions. A probabilistic version requires stated-confidence radii and a union bound over the chosen devices and horizon. The 1,000 numerical bounded-error checks supplied with the code test consistency, not the truth of the hardware premises.

The physical meaning of this budget is direct. A 2~cm distance error corresponds to $58.31\us$ one-way acoustic delay. If the electronic residual were zero, a sufficient pairwise $30\us$ budget would allow only $c(30\us)/2=5.145$~mm individual distance uncertainty under the symmetric bound. This is a sufficient worst-case requirement, not a universal localization threshold. Shared position errors may cancel partly depending on geometry. Clock and position statistics must therefore be modeled jointly rather than adding their empirical p95 values and calling the result a p95 guarantee.

\subsection{Correlated broadcast erasures and group reliability}
Let $H_k\in\{g,b\}$ be a common Markov channel state for audio block $k$, with transition probabilities $\alpha=\Prob(g\!\to\!b)$ and $\gamma=\Prob(b\!\to\!g)$. Its stationary bad-state fraction is $\pi_b=\alpha/(\alpha+\gamma)$. This is a reduced-order burst model inspired by established finite-state channel modeling~\cite{gilbert,elliott}, not a fitted UWB propagation model.

For copy $r$, common erasure $E_{kr}$ has probability $q_{H_k}$, shared by all receivers. Receiver-local erasure $U_{kir}$ has probability $p_{iH_k}$. Local losses are independent conditional on the common state and common erasure. The observed loss is $L_{kir}=E_{kr}\lor U_{kir}$; a receiver misses the block if all $R$ copies are lost. For fixed $R$,
\begin{equation}
 m_i(R)=\sum_{h\in\{g,b\}}\pi_h\big[q_h+(1-q_h)p_{ih}\big]^R.\label{eq:missing}
\end{equation}
The probability that all $N$ receivers obtain the block is
\begin{equation}
 P_{\rm all}(R)=\sum_h\pi_h\sum_{j=0}^{R}\binom{R}{j}
 (1-q_h)^j q_h^{R-j}\prod_{i=1}^N(1-p_{ih}^{j}).\label{eq:all}
\end{equation}
To see this, condition on $j$ copies escaping the common erasure; receiver $i$ then fails only if all $j$ local attempts fail. The $j=0$ term is zero for any nonempty receiver population. Mixing over the state gives~\eqref{eq:all}. Time correlation affects bursts and finite-run variability; a stationary marginal formula does not by itself describe those tails. Adaptive policies depending on past states require history-conditioned analysis and are not certified by this fixed-$R$ identity.

An IID estimate $(\sum_h\pi_h\ell_{ih})^R$, where $\ell_{ih}=q_h+(1-q_h)p_{ih}$, generally differs from~\eqref{eq:missing}. For $R>1$, convexity implies $\E[\ell^R]\geq(\E\ell)^R$: ignoring state variation can underestimate individual failure. Cross-receiver common losses also make multiplication of unconditional individual success probabilities invalid. As $N$ grows, even good average receiver delivery may coexist with much poorer all-receiver delivery. This is the reliability quantity relevant to a synchronized group rendering the same block.

\subsection{Group-admission certificates and an unavoidable outage floor}
\begin{proposition}[Fixed-repetition group bounds]\label{prop:groupbounds}
For the stationary channel in~\eqref{eq:all}, with $N\geq1$ and $R\geq1$, let $F_{\rm all}(R)=1-P_{\rm all}(R)$. Then
\begin{equation}
\sum_h\pi_h q_h^{R}\leq F_{\rm all}(R)
\leq\min\!\left\{1,\sum_{i=1}^{N}m_i(R)\right\}.
\label{eq:groupbounds}
\end{equation}
Hence $\sum_h\pi_hq_h^R>\epsilon_g$ rules out a target group-failure probability at most $\epsilon_g$ for that $R$ even if all local links are perfect. Conversely, $\sum_i m_i(R)\leq\epsilon_g$ is a sufficient, generally conservative, group-admission certificate under the stated channel probabilities.
\end{proposition}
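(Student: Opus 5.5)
We prove the two inequalities in~\eqref{eq:groupbounds} separately, working with the failure event $\mathcal F=\bigcup_{i=1}^N M_i$, where $M_i$ denotes the event that receiver $i$ misses the block after all $R$ copies. By the model of the previous subsection, $F_{\rm all}(R)=\Prob(\mathcal F)$ and $\Prob(M_i)=m_i(R)$ as given in~\eqref{eq:missing}.

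For the lower bound, let $\mathcal C$ be the event that every one of the $R$ copies suffers a common erasure, $E_{k1}=\dots=E_{kR}=1$. On $\mathcal C$, every $L_{kir}=1$, so every receiver misses the block. Since $N\geq1$, we have $\mathcal C\subseteq M_1\subseteq\mathcal F$. Conditional on the state $H_k=h$, the common erasures are independent across copies with probability $q_h$, so $\Prob(\mathcal C\mid h)=q_h^R$. Mixing over the stationary distribution gives $\Prob(\mathcal C)=\sum_h\pi_hq_h^R$.

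As a consistency check against~\eqref{eq:all}, note that $1-P_{\rm all}\geq\sum_h\pi_h q_h^R$ is equivalent to $P_{\rm all}\leq\sum_h\pi_h(1-q_h^R)$. This holds because each product $\prod_i(1-p_{ih}^j)$ lies in $[0,1]$ and vanishes at $j=0$, while the remaining binomial weights sum to $1-q_h^R$. This algebraic route independently confirms the lower bound.

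For the upper bound, apply Boole's inequality: $\Prob(\mathcal F)\leq\sum_i\Prob(M_i)=\sum_i m_i(R)$. Combining this with the trivial bound $\Prob(\mathcal F)\leq1$ yields the minimum in~\eqref{eq:groupbounds}. The main point to verify is that $m_i(R)$ is exactly the stationary marginal of $M_i$, which requires the following step. Conditional on $h$, receiver $i$'s per-copy losses $L_{kir}$ are independent across copies with probability $\ell_{ih}=q_h+(1-q_h)p_{ih}$. This follows because both $E_{kr}$ and $U_{kir}$ are conditionally independent across $r$ given $h$; we read the model as fixing the state per block and drawing fresh erasures per copy. Since the union bound requires no independence across receivers, the common-erasure correlation does not invalidate it; it only makes the bound conservative.

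The two stated consequences then follow immediately. If $\sum_h\pi_hq_h^R>\epsilon_g$, then $F_{\rm all}(R)>\epsilon_g$ regardless of the values of $p_{ih}$, including $p_{ih}=0$. If $\sum_i m_i(R)\leq\epsilon_g$, then $F_{\rm all}(R)\leq\epsilon_g$. The only genuine obstacle is making the conditional-independence structure across copies explicit, since the paper states it only across receivers. Once that is fixed, both bounds reduce to the containment $\mathcal C\subseteq\mathcal F$ and a union bound.
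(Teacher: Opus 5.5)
Your proof is correct and follows the paper's argument: the lower bound comes from the event that all $R$ copies are erased in common, which forces every receiver to miss the block, and the upper bound comes from Boole's inequality over the individual missing-block events together with the trivial bound of one. Your algebraic cross-check of the lower bound directly from \eqref{eq:all}, and your explicit statement of the across-copy conditional independence that \eqref{eq:missing} and \eqref{eq:all} presuppose, are sound additions that the paper leaves implicit.
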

\begin{proof}
When every copy is erased in common, no receiver obtains the block, giving the lower bound. Group failure is the union of individual missing-block events, so Boole's inequality and~\eqref{eq:missing} give the upper bound; every failure probability is at most one.
\end{proof}

The two bounds answer different questions. A violation of the common-erasure floor proves infeasibility within the assumed model, but satisfying that necessary condition does not certify group service. The union bound can certify service without assuming independent receiver successes, at the expense of conservatism when many receivers fail together. Exact expression~\eqref{eq:all} is more discriminating when its channel parameters are known. If $\alpha=0.005$, $\gamma=0.1$, $q_g=0.001$, and $q_b=0.2$, then $\pi_b=1/21$ and the unavoidable failure floors for one, two, and three copies are approximately $1.05\times10^{-2}$, $1.91\times10^{-3}$, and $3.81\times10^{-4}$, respectively. These are arithmetic consequences of assumed parameters, not measured packet-loss rates. In particular, a 99.9\% group-delivery target is unattainable with two copies, even with perfect local channels, in this particular assumed model.

A feedback-based algorithm does not inherit these certificates simply by applying the fixed-$R$ formula to an average loss estimate. It would require knowledge or validated uncertainty bounds for the hidden state, conditional correlations between copies, report delays, and errors in its current observations. The reference adaptive policy later evaluated has none of these guarantees. We use the certificate as an analysis tool and a design acceptance condition, not as evidence of an implemented state-aware controller.

\subsection{Causality, observability, and coupled feasibility}
Define $t_{i,k}^{\rm ready}$ as the source-time equivalent of the instant at which receiver $i$ has a complete, decodable audio block and can commit its output action; let $\tau_{i,k}^{\rm exec}$ be the additional required execution lead time. A necessary condition for the target emission~\eqref{eq:emit} is
\begin{equation}
 t_{i,k}^{\rm ready}+\tau_{i,k}^{\rm exec}\leq
 \tau_k+B+\frac{D-\hat d_i}{c}.\label{eq:causal}
\end{equation}
The clock and transport simulators do not calculate $t_{i,k}^{\rm ready}$ or $\tau_{i,k}^{\rm exec}$. Consequently, a timing residual computed with~\eqref{eq:arrival} does not establish that~\eqref{eq:causal} is satisfied. Adding repeats late in the frame may improve delivery in an erasure abstraction yet prevent the recovered block from meeting its intended output deadline.

The conditional system feasibility region is the intersection of four requirements: (i) identifiable clock observations and a controllable output actuator; (ii) a specified upper bound or quantile for the one-listener acoustic span; (iii) group-level block delivery meeting its risk tolerance; and (iv) sufficient reserved service within the required interval, including beacons, feedback, and sensing traffic. The first requirement is a hardware prerequisite; the others are model-level constraints. Passing any one does not imply the remainder. In particular, the upper bound of Proposition~\ref{prop:budget} is sufficient for timing under bounded errors, while the common-erasure floor in Proposition~\ref{prop:groupbounds} supplies a necessary reliability check. These logical directions must not be conflated.

For the illustrative $30\us$ span threshold, a single-sample actuation quantum of $20.833\us$ leaves at most $9.167\us$ of the worst-case sufficient bound for clock and geometry. If clock errors were identically zero, the remaining symmetric distance-uncertainty allocation would be about $1.57$~mm per speaker-to-listener path. This follows from~\eqref{eq:bound}; it is not an experimentally established localization requirement. It demonstrates why a centimetre-scale position-error assumption cannot support a tens-of-microseconds deterministic arrival guarantee merely by refining the radio clock.

\subsection{Resource constraints and the design objective}
For $L$ SWC payload bytes, the documented SR1120 IOOK FEC1.25 model at 20.48~MHz gives~\cite{spark}
\begin{equation}
 a(L)=(15.97+0.4883L)\us.\label{eq:airtime}
\end{equation}
For $L=112$, $a=70.6596\us$. The modeled reserved slot $s=150\us$ additionally assumes processing, SPI, concurrency checks, and margin fit; it is not a measured minimum. Let block period be $T_f$, beacon reserve $s_b/T_b$, and per-receiver reporting reserve $s_{\rm fb}/T_{\rm fb}$. Offered schedule load is
\begin{equation}
 \rho=\frac{\E R\,s}{T_f}+I_{\rm adapt}\frac{Ns_{\rm fb}}{T_{\rm fb}}+\frac{s_b}{T_b}.\label{eq:load}
\end{equation}
Worst-case admission uses the maximum allowed $R$, not only its average. When $\rho>1$, delivery outcomes from the erasure process are not valid real-time service predictions. We mark these points infeasible rather than pretending the queue can remain stable.

A prospective controller could minimize $\E[R]a+\lambda\E[T_{\rm completion}]$ subject to group failure, arrival spread, playback deadline, and scheduling constraints. This expression identifies competing quantities; neither an optimal policy nor a closed-loop scheduler is solved by the present simulation. The current policies are transparent reference heuristics. Source data airtime is $\E[R]a/T_f$, while aggregate audio-receive activity is approximated by $N\E[R]s/T_f$. The latter is a duration proxy, not measured energy. A complete energy accounting boundary includes receiver feedback, beacons, idle/wakeup, DSP, codec, and positioning costs.

\section{Reference Algorithms and Implementation Boundaries}\label{sec:methods}
\subsection{Affine clock tracking and playback mapping}
At each beacon, a standard Kalman predictor forms $\hat{\bm x}^-=\bm F\hat{\bm x}$ and $\bm P^-=\bm F\bm P\bm F^T+\bm Q$. On a valid observation,
\begin{align}
 \bm K&=\bm P^-\bm H^T(\bm H\bm P^-\bm H^T+\sigma_v^2)^{-1},\\
 \hat{\bm x}&=\hat{\bm x}^-+\bm K(z-\bm H\hat{\bm x}^-),\\
 \bm P&=(\bm I-\bm K\bm H)\bm P^-(\bm I-\bm K\bm H)^T+\bm K\sigma_v^2\bm K^T.
\end{align}
The Joseph covariance update preserves positive-semidefinite structure numerically. A lost beacon triggers prediction only. For a render event $h$ seconds after the last beacon, the offset estimate is $\hat b(h)=\hat b_k+h\hat s_k$. Receiver state and matrix work are constant per node; network-wide update cost and state scale as $O(N)$. This is a reference use of established estimation theory~\cite{kalman}, not an estimator novelty claim.

Four paired timing policies use identical simulated observations and geometry: initial-offset-only free running; last-valid-offset correction without skew estimation; affine correction without position; and affine correction with the most recent external position estimate. Actuation is ideal except for an independent uniform event-phase error on $[-q_D/2,q_D/2]$. This models a granularity floor and is not a waveform resampler, PLL controller, correlated DAC jitter process, or proof that an embedded implementation can move playback phase freely. Rate correction must in reality preserve continuous audio and avoid queue underflow, audible discontinuities, and shared-PLL conflicts~\cite{didier,korse,spark}.

\subsection{Repetition policies and feedback accounting}
The fixed policies transmit one, two, or three application copies with a shared block sequence and distinct copy identifiers. Receivers deduplicate copies and consume each block at most once. Candidate loss outcomes are drawn once per run and reused across policies, ensuring paired comparisons on the same channel realization. Policies never use an outcome from a copy that would not have been transmitted.

Every $T_{\rm fb}=0.2$~s, an adaptive policy receives a per-receiver first-copy loss fraction $y_i$ from the preceding interval and updates $\hat p_i\leftarrow0.7\hat p_i+0.3y_i$, with initial value 0.1. Using only first-copy outcomes avoids selection bias from observing extra copies only after choosing them. Mean-risk aggregates $\bar p=N^{-1}\sum_i\hat p_i$; worst-risk uses $\max_i\hat p_i$. For target proxy $\epsilon=0.01$,
\begin{equation}
 R=\min\!\left(3,\max\!\left(1,\left\lceil\frac{\log\epsilon}{\log p_{\rm agg}}\right\rceil\right)\right).\label{eq:heuristic}
\end{equation}
The implementation clips estimated probabilities to $(10^{-8},1-10^{-8})$ for numerical stability. Equation~\eqref{eq:heuristic} extrapolates an IID per-copy failure probability to a repetition count. It neither computes group failure nor represents common bursts, finite-sample uncertainty, delayed feedback, or a binding service limit. The nominal $1\%$ parameter is therefore a heuristic tuning value, not a reliability guarantee. Dedicated reports cost $N$ slots per reporting interval and are assumed received correctly. Their RF airtime is not equated with the entire reserved-slot duration. Fixed-copy baselines do not pay for unused reports, preventing an artificial adaptive advantage.

\begin{algorithm}[t]
\caption{Reference error-budget and repetition procedure}\label{alg:reference}
\begin{algorithmic}[1]
\State Initialize $\hat{\bm x}_i,\bm P_i,\hat p_i=0.1$ for each receiver
\For{each beacon epoch}
 \State Predict $\hat{\bm x}_i,\bm P_i$ for all $i$
 \State Update only nodes with valid calibrated observations
\EndFor
\For{each modeled render event}
 \State Extrapolate offset and read latest position plus age
 \State Form intended emission delay using~\eqref{eq:emit}
 \State Evaluate electronic and one-listener arrival spans
\EndFor
\For{each reporting interval}
 \State Update $\hat p_i$ using past first-copy observations only
 \State Select fixed $R$ or apply~\eqref{eq:heuristic}
 \State Evaluate mean and maximum schedule load
 \State Mark overload outcomes infeasible for real-time claims
\EndFor
\For{each audio block}
 \State Send $R$ application copies; deduplicate at receivers
 \State Account per-receiver and simultaneous group delivery
\EndFor
\end{algorithmic}
\end{algorithm}

\subsection{Offline feasibility audit versus online policies}
The analytical admission test differs deliberately from the implemented policies. If the state probabilities and receiver-local loss parameters are specified, an \emph{offline} analyst can evaluate~\eqref{eq:all} for each permissible fixed $R$ and reject a configuration whenever $P_{\rm all}(R)<1-\epsilon_g$. The common-erasure floor~\eqref{eq:groupbounds} offers an immediate impossibility check, and the union-bound certificate offers a conservative sufficient check. These evaluations do not require estimating an unobserved state from live receiver reports; they require accepting the model parameters as inputs. They are not a new feedback strategy, and they do not substitute for calibration.

The corresponding online decision is substantially harder. A causal controller sees a delayed sample of first-copy outcomes rather than the hidden common channel state; feedback may itself be corrupted, reporting intervals consume reserved slots, and execution deadlines constrain how late another copy can be sent. Thus a theoretically preferable repetition decision based on perfect current state would be an oracle baseline, not an implementable guarantee unless its sensing and control costs are included. We distinguish fixed-copy feasibility, offline model-based assessment, and the two deployed-in-simulation IID heuristics when interpreting results.

The estimator also has a structural blind spot: a constant receive-side observation bias may be absorbed into the estimated clock offset. The affine state can track the apparent offset without recovering the unbiased offset needed for absolute output timing. Clock-observation bias and output actuation therefore enter the feasibility model independently. A reduced estimator covariance is not proof that the true timestamp or emitted audio is equally accurate. The model intentionally retains a bounded observation-bias term to keep this distinction visible.

\subsection{Implementation boundaries and reproducibility}
Algorithm~\ref{alg:reference} describes the reference components and validity checks; the simulation executes clock and transport components separately. It does not run a coupled codec/playout queue, network discovery procedure, admission controller, source-authentication protocol, or location-zone handover. Marking overload is an analysis validity condition, not an embedded scheduler implementation. The $O(N)$ reporting update does not include the potentially dominant cost of obtaining calibrated position and timing data.

The supplied Python package includes configuration, simulation, run-level analysis, plotting, analytical checks, acquisition utilities, and unit tests. A post-run sign audit corrected the local-deadline firing error to $\hat b-b$; the archived statistics and plots correspond to the corrected model. Current outputs carry source and configuration hashes. Seventeen tests check numerical identities, Joseph covariance properties, repeatability, shared-loss behavior, deadline sign, complete transmission denominators, duplicate handling, and rejection of incomplete hardware metadata. The original execution harness depended on an environment-specific process supervisor and absolute paths. The distributed package supplies relative-path verification commands and retains the computational model and its archived outputs. Passing tests establish internal consistency, not calibrated external validity.

\section{Experimental Design and Evaluation Protocol}\label{sec:evaluation}
\subsection{Operating point and parameter provenance}
The offered stream is 48~kHz stereo, 16~bits per sample, divided into 0.5~ms blocks: 24 stereo sample frames correspond to 96~PCM bytes, with a further 16-byte assumed application header. The 112-byte packet is a custom transport assumption, not a drop-in modification of the stock SAC pipeline. Actual SAC header bytes, fragmentation, and feature overhead must be budgeted separately. No compression algorithm or audio waveform is simulated; successful block reception represents intact offered PCM payload only.

Table~\ref{tab:parameters} distinguishes documented coefficients from assumptions. Channel transition probabilities, independent/local erasure probabilities, clock and sensor statistics, actuation granularity, and reserved-slot durations are modeling assumptions rather than fitted hardware measurements. The objective is sensitivity and failure-boundary exploration, not numerical prediction of an uncalibrated board. FEC1.25 affects the documented airtime coefficient only; no unmeasured coding-gain/PER curve is invented. The separate 10~ms beacon reserve is deliberately conservative even if frequent data transmissions could maintain radio synchronization. An actual schedule must also show that feedback slots fit at their intended epochs: utilization at or below one is necessary but not sufficient for discrete slot-placement and deadline feasibility.

\begin{table}[t]
\caption{Main parameters. Except documented airtime, these are modeling assumptions, not hardware measurements.}\label{tab:parameters}
\centering\footnotesize
\begin{tabularx}{\columnwidth}{Yp{30mm}}
\toprule
Parameter & Value \\
\midrule
Audio / SWC bytes per block & 48~kHz stereo16;112~B \\
Block / assumed reserved slot & 0.5~ms /150~$\mu$s \\
SR1120 airtime coefficients~\cite{spark} &15.97~$\mu$s;0.4883~$\mu$s/B \\
Beacon period / reserved slot &10~ms /100~$\mu$s \\
Feedback period / slot per RX &200~ms /100~$\mu$s \\
Maximum copies / risk proxy &3 /0.01 \\
Clock initial offset / skew &uniform$\pm500$~$\mu$s /$\pm30$~ppm \\
Skew walk / timestamp noise &0.05~ppm/$\sqrt{\rm s}$ /2~$\mu$s SD \\
Timestamp fixed bias &uniform$\pm2$~$\mu$s \\
Actuator quantum / render step &20.833~$\mu$s /5~ms \\
Position coordinate SD / update &0.02~m /100~ms \\
Nominal position age / max speed &50~ms /0.5~m/s \\
Common / local beacon loss &0.01 /0.03 \\
Transport state transitions $\alpha,\gamma$ &0.005,0.1 per block \\
Common erasure $q_g,q_b$ &0.001,0.2 per copy \\
Local erasure $p_{ih}$ &$0.003+0.06(r_i/10)^2+0.12\mathbf1_b$ \\
Transport radius $r_i$ &uniform1--10~m \\
Timing speaker radius &uniform2--8~m \\
Sound speed / $D$ / $B$ &343~m/s /12~m /10~ms \\
\bottomrule
\end{tabularx}
\end{table}

\subsection{Trial matrix and statistical estimands}
Main timing experiments use eight receivers, 30~s runs, 10~ms beacons, 5~ms render events, and a 2~s excluded warmup. Main transport experiments use $N\in\{2,4,8,16,32,64\}$, 10~s per run, and 20,000 blocks per run. Each setting uses 30 seeds starting at 20261017. Timing sensitivity runs last 10~s with the same warmup; their results are not pooled with 30~s main runs.

Sensitivity covers position coordinate standard deviations $\{0,0.005,0.02,0.1,0.5\}$~m; timing noise $\{0.5,2,10,50\}\us$; actuator quanta $\{0.5,20.833\}\us$; nominal position ages $\{0,0.05,0.2,0.5\}$~s; and six receiver counts. Transport sensitivity covers three bad-state common-erasure probabilities at $N=16$ and four reserved-slot lengths at $N=64$. This yields 120 main timing, 900 main transport, 2,520 timing-sensitivity, and 1,050 transport-sensitivity policy--run records. The 4,590 records count policy--run outcomes, not independent physical measurements: multiple policies share a random seed and, within each experiment family, relevant underlying draws.

The primary timing estimand is the arithmetic mean of 30 run-specific 95th percentiles. It is not the 95th percentile of events pooled across all 30 runs, and it need not equal that pooled quantile. For a run-level metric $X_j$, the plotted interval is
\begin{equation}
 \bar X\pm t_{0.975,29}\,s_X/\sqrt{30}.
\end{equation}
Intervals describe variability under the fixed assumed model, not uncertainty about product performance or all possible rooms. Packet samples within one run are not treated as independent replications. No multiplicity-adjusted null-hypothesis tests are carried out: interval widths summarize Monte Carlo variation at fixed parameter values, and the analysis emphasizes interpretable effect sizes and failure modes. Comparisons across $N$ do not enforce nested identical geometries, so small nonmonotonic fluctuations are not physical capacity effects.

\subsection{Separate observables and coupled failure criteria}
The simulation comprises two intentionally separate families. The clock experiment updates the estimator at beacon epochs and evaluates receiver emissions at event times while supplying an idealized output-phase error and sampled external positions. The transport experiment draws state-dependent common and receiver-local erasures for up to three copies of each block, then computes delivery and reservation metrics. Neither experiment integrates the other subsystem's outcomes into a continuous audio queue. In particular, blocks classified as received in the transport experiment are not inserted into a PCM decoder or scheduled against the emission deadlines of the clock experiment.

This separation is useful for identifying individual physical bottlenecks but prevents an end-to-end success probability from being estimated by simply multiplying reported timing and delivery statistics. The correlation between a receiver missing a block, waiting for later copies, losing a timing observation, and missing a DAC deadline is unspecified. Under a deployed system, these events can share interference, buffering, and processor-service mechanisms. Their joint law cannot be recovered from the marginal tables in this paper. We therefore report the separately identifiable feasibility conditions without assigning an unsupported joint user-quality probability.

The analytical checks added to the manuscript use the same fixed common-state parameters as the transport model but are not additional measured trials. We separately evaluate the common-erasure failure floor for $R\in\{1,2,3\}$ and the nominal reserved-load threshold implied by the configured slot duration. These computations are deterministic consequences of parameter choices and can be verified without an RF stack. They complement, but do not calibrate, the Monte Carlo experiments.

\subsection{Cross-method comparison and statistical limitations}
The timing policies share the same within-run clock and geometry realizations; the transport policies share the same potential copy losses within a run. Consequently, differences within a seed are paired and should not be interpreted as estimates from unrelated physical deployments. Between seeds, receiver locations and channel paths vary according to the chosen assumptions. We summarize each condition with a run-level mean and confidence interval, retaining the seed and policy labels in the raw CSV. Inference beyond the modeled geometry, mobility trajectory, or erasure probabilities requires a second layer of environment-level or session-level sampling that is unavailable here.

Failure-regime sweeps are intentionally stress tests. A parameter combination that violates the utilization bound is not evidence of a stable overloaded audio implementation; it demonstrates that the independent packet-erasure results cannot be scheduled using the assumed time slots. Likewise, a low residual phase range sampled independently at each render event is not a demonstrated continuous-time clock servo. Both caveats are treated as validity conditions before performance comparisons are made.

\subsection{Metrics, baselines, and validity conditions}
Electronic residual span is $\range_i(e_i)$, distinct from the total emission span after intentional distance delays. Arrival span follows~\eqref{eq:arrival}. We report per-run p95 and maxima, plus the fraction above the illustrative $30\us$ threshold. The timing geometry uses speakers on independently drawn angles and radii and a bounded-speed listener trajectory $\bm u(t)=[(0.5/0.2)\sin(0.2t),0]^T$~m. Position estimates are refreshed every 0.1~s from the trajectory 0.05~s earlier with shared two-dimensional Gaussian coordinate noise; speaker positions are exact. Consequently, nominal age understates event-time age by up to approximately one update interval, before any real actuation horizon.

For transport, receiver delivery is $D_i=1-K^{-1}\sum_k M_{ki}$. We report average delivery, $\min_iD_i$, simultaneous all-receiver delivery, maximum consecutive missing-block duration, and the index $J=(\sum_iD_i)^2/(N\sum_iD_i^2)$. The latter is named a Jain-style delivery index and used as a descriptive statistic, not as proof of application fairness. A shared outage can produce perfect equality and unusable service. Lost blocks are not automatically buffer underruns or perceptual dropouts because no concealment, codec, or coupled playback buffer is simulated.

Fixed-copy and risk-based transport methods are compared under the same candidate channel outcomes, with their actual repetition and report reservations reported. This is not an equal-energy policy tournament: fixed-three spends more airtime and adaptive methods pay feedback. Policy ordering under unequal repetitions is not evidence of a Pareto improvement; optimization claims are conditional on the delivery target, presentation deadline, and complete feedback, sensing, and energy accounting boundary. Overloaded points expose a boundary but do not supply valid delivered-audio performance. Likewise, source airtime and receiver-state duration are kept separate from unmeasured watts, joules, and battery life.

\subsection{Reproducibility, Data Integrity, and Claim Traceability}\label{sec:reproducibility}
The computational record retains the configuration, implementation, per-run observations, and summary estimands as separate artifacts. The executable model is provided in \texttt{simulation.py}; \texttt{config.json} specifies trial durations, clock and position parameters, common-loss transition probabilities, copy limits, resource reservations, and random seeds. The analysis script derives the tables and graphical summaries from saved trial outputs instead of maintaining the full time-step history in memory. Clock and transport experiments are stored in separate tabular outputs, together with sensitivity records and summaries. These separations permit every reported metric to be traced to a defined policy, parameter setting, and run identifier.

The verification procedure distinguishes two operations. An integrity check assesses archived outcomes without repeating the computational sweep: expected row counts, seed coverage, nonempty outputs, input-source hashes, and consistency of numerical claims are inspected. An independent execution follows the same configuration and evaluation code to reproduce the numerical study. The two operations answer different questions. Matching an output hash establishes artifact integrity under a particular source/configuration identity; it does not by itself establish correctness of the mathematical model. Likewise, rerunning a stochastic program with the documented seeds establishes reproducibility of its calculations, not representativeness of its external input distributions.

A policy--run record is the observational unit for confidence intervals and is not equivalent to an independently sampled radio deployment. Within a run, several candidate methods observe the same potential fading and clock events, which makes comparisons paired. For independent runs, seeds determine different paths and geometries according to the model. This design avoids inflating statistical precision by treating temporally correlated blocks as independent replications. It also allows a policy difference to be reproduced using the same candidate random events instead of comparing two unrelated channel samples.

Several interpretive invariants are checked alongside the numerical outcomes. Source payload airtime follows the documented airtime expression for the selected payload length; the reserved-slot coefficient is accounted for separately. Receiver activity is a sum of modeled active durations, not a current measurement. All-receiver block delivery uses the source manifest as its denominator, avoiding the upward bias that results from counting only blocks that triggered a receive callback. A block received by a subset of nodes is never counted as full-group delivery. In the timing model, acoustic arrival spread is computed after geometric compensation and is not relabeled electronic clock jitter. These invariants align the implementation, analytical definitions, and reported figures.

Finally, the paper's computed quantities are divided into conditional mathematical results, Monte Carlo estimates, and documented interface parameters. Conditional bounds use their stated channel and error premises; Monte Carlo confidence intervals refer to the corresponding simulation configuration; and air-interface coefficients are cited to the vendor documentation where applicable. No algebraic identity is inferred from a confidence interval, and no sample-duration proxy is interpreted as power or battery lifetime. This three-part evidence classification makes the numerical conclusions independently auditable at their declared level of abstraction.

\section{Simulation Results and Analysis}\label{sec:results}
\subsection{Clock ablation: a small electronic gain is not acoustic success}
Fig.~\ref{fig:clock} and Table~\ref{tab:clockresults} show the timing ablation. Initial-offset-only free running accumulates skew: its mean run-p95 electronic span is $1341.47\us$. Offset-only resetting reduces this to $23.784\us$, and affine tracking gives $21.186\us$. The modest affine improvement is expected when beacons arrive every 10~ms and a $20.833\us$ actuator granularity dominates the residual. Under this model, a sophisticated estimator alone does not remove the sampling/actuation floor.

The acoustic result differs by orders of magnitude. Without distance correction, the affine policy has a mean run-p95 arrival span of $21953.66\us$. Position compensation reduces this to $419.43\us$ with a 95\% interval half-width of $8.37\us$. This is a substantial model-level improvement, yet 98.864\% of evaluated events remain above the $30\us$ study target. Both positive and negative statements matter: propagation correction helps, but the assumed sensing and update system does not establish the desired alignment.

The two affine policies have exactly the same electronic residuals because they differ only in distance compensation. That equality is an intended ablation, not independent evidence confirming hardware repeatability. Similarly, the computed common-arrival schedule has a roughly 45~ms guard from $B+D/c$; the observed correction cannot be used to claim lower wireless transport latency than a commercial technology. The model estimates different stages for different purposes and does not collapse them into one favorable latency headline.

\begin{table}[t]
\caption{Main timing results, $N=8$. Mean of run-level p95, with 95\% confidence half-width across 30 runs; all values simulated.}\label{tab:clockresults}
\centering\small
\begin{tabularx}{\columnwidth}{Yrr}
\toprule
Policy & Electronic ($\mu$s)& Arrival ($\mu$s)\\
\midrule
Free running & $1341.47\pm85.01$&$22098.00\pm1090.94$\\
Offset only &$23.784\pm0.113$&$21953.61\pm1069.22$\\
Affine, no position &$21.186\pm0.154$&$21953.66\pm1069.22$\\
Affine + position &$21.186\pm0.154$&$419.43\pm8.37$\\
\bottomrule
\end{tabularx}
\end{table}
\begin{figure*}[t]
\centering\includegraphics[width=\textwidth]{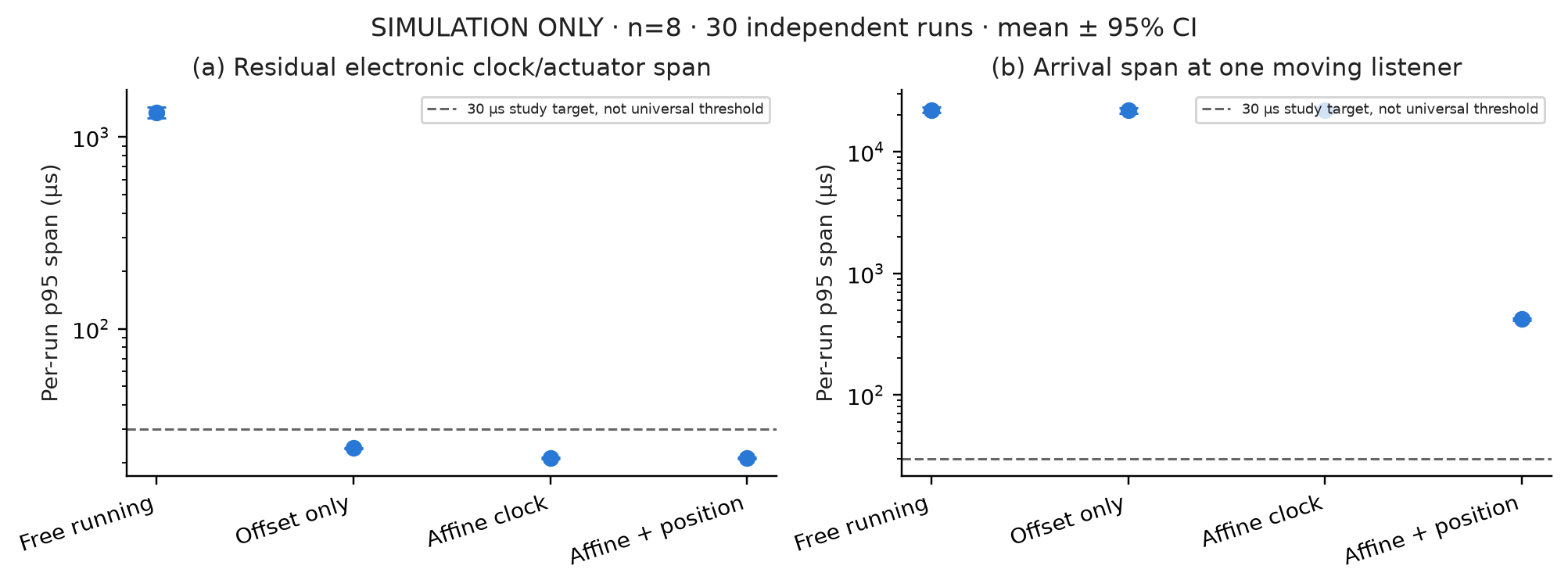}
\caption{Simulation-only clock ablation. Points are means of run-level p95 spans; bars are 95\% $t$ intervals over 30 independent runs. Logarithmic axes separate the electronic residual from one-listener arrival spread. The dashed $30\us$ line is an illustrative study target, not a universal perception threshold.}\label{fig:clock}
\end{figure*}

\subsection{Position age, geometry, and actuator sensitivity}
Fig.~\ref{fig:position} isolates sensor and actuator limitations using the shorter sensitivity runs. Even zero coordinate noise does not make compensated arrivals sample-aligned because motion and stale estimates remain. At the baseline 0.05~s nominal position age, the mean run-p95 compensated span is about $288.59\us$ with zero coordinate noise and about $382.90\us$ with 2~cm coordinate noise in this 10~s setting. These values differ from the 30~s main result because the trajectory segment, random realizations, and horizon differ; the difference is not suppressed by pooling.

Increasing nominal sensor age from 0 to 0.5~s increases the compensated sensitivity span from approximately $305.46\us$ to $1425.60\us$ in the final CSV. The decisive mechanism is motion during information age. Lowering actuation granularity from one sample to $0.5\us$ sharply reduces the electronic residual but barely changes the position-limited arrival span. It is therefore unjustified to optimize RF timestamp resolution while ignoring a 100~ms sensor update cycle. The budget in~\eqref{eq:bound} predicts this qualitative bottleneck without claiming calibrated sensor accuracy.

Geometry also affects scalability. The position error is a shared coordinate perturbation, not an independent distance error injected separately at each receiver; angular configuration changes how it projects onto distances. More receivers can enlarge the maximum-minus-minimum spread even if per-node timing statistics remain unchanged. The model holds speaker coordinates exact and sound speed fixed, making the result optimistic relative to uncalibrated speakers, temperature variation, room multipath, and head movement. No claim of whole-room or multi-listener phase correction follows.

\begin{figure*}[t]
\centering\includegraphics[width=\textwidth]{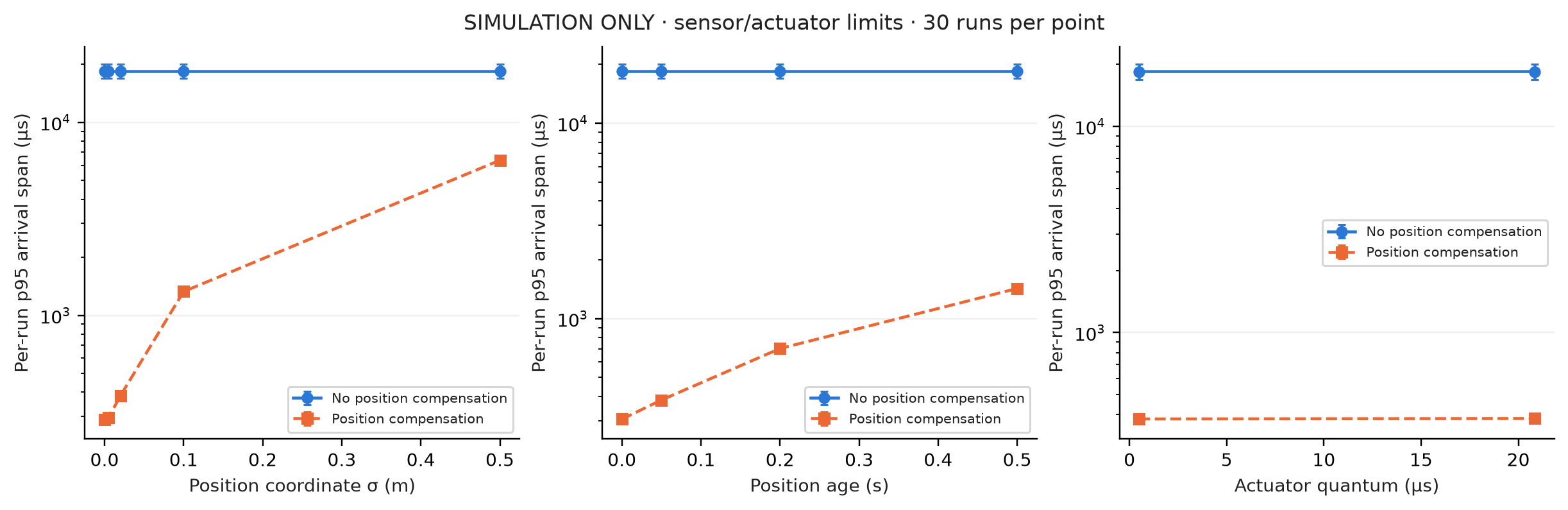}
\caption{Simulation-only position-error, sensor-age, and actuator-quantum sensitivities. Thirty 10~s runs per setting, with 2~s warmup, not pooled with the 30~s main timing study. Error bars are run-based 95\% intervals. Position and timing parameters are assumed; the architecture targets a single listener.}\label{fig:position}
\end{figure*}

\subsection{Receiver reliability versus simultaneous group delivery}
Fig.~\ref{fig:reliability} distinguishes the average receiver, worst receiver, and all-receiver event. At $N=64$, one copy gives mean receiver delivery of about 0.96027 in the model but all-receiver delivery only 0.18522. Reported values come from the archived transport-run CSV and are not inferred from another platform's advertised specifications. Three copies improve worst-receiver delivery to 0.997517 and all-receiver delivery to 0.974777, with respective 95\% half-widths 0.000112 and 0.001037. A user-facing claim that ``delivery is above 99\%'' would conceal the group-level criterion and therefore be misleading.

Fixed two copies yield worst-receiver delivery about 0.99033 and all-receiver delivery about 0.90398. Mean-risk selects two copies throughout the main setting, reproducing those delivery outcomes while paying for reports. Worst-risk uses about 2.011 copies on average at $N=64$ and gives all-receiver delivery about 0.90473, a small effect at higher control cost. These outcomes do not justify a universal adaptive-policy advantage. Fixed three buys reliability through a visibly larger resource allocation rather than winning at the same budget.

The trend is consistent with~\eqref{eq:missing}--\eqref{eq:all}: receiver heterogeneity and common fading prevent population growth from being treated as free reliability scaling. The auxiliary analytical checker also illustrates an IID surrogate with matched marginal losses, although its receiver-location draw order is not the same as in the transport simulation. It is used to check the formula and illustrate dependence, not to claim exact prediction of each simulated row. No curves are labeled measured RF performance or Auracast packet loss.

\begin{figure*}[t]
\centering\includegraphics[width=\textwidth]{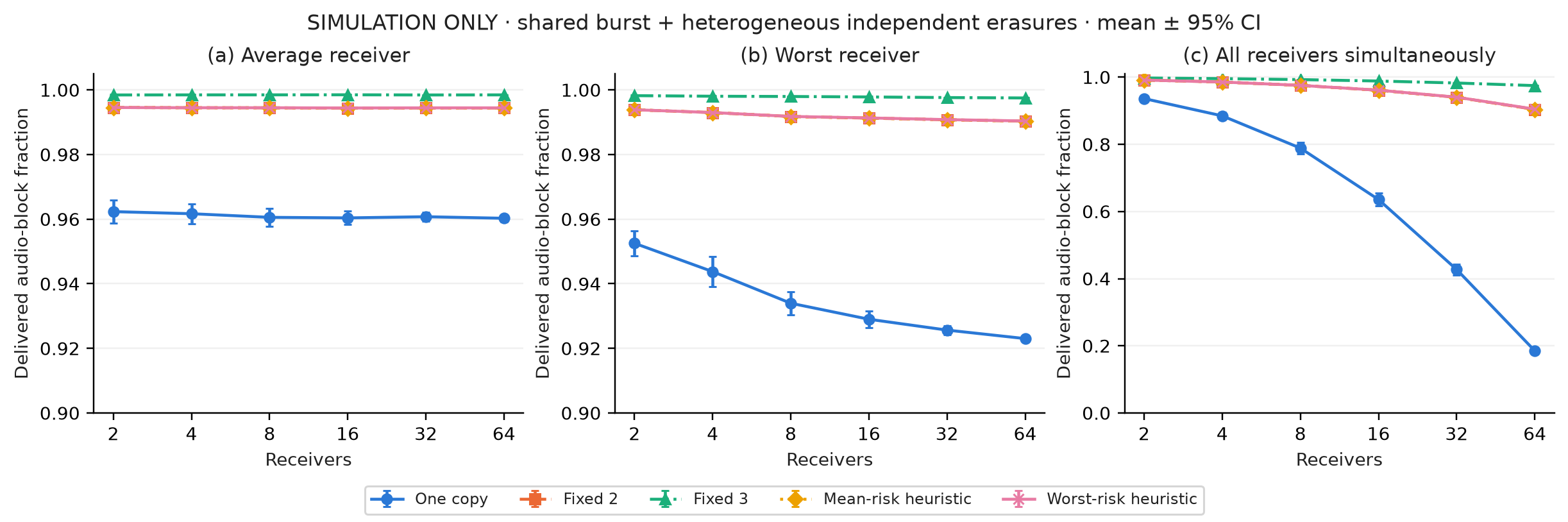}
\caption{Simulation-only delivery scaling under common Markov fades and heterogeneous local erasures. Error bars use 30 independent runs per point. Panels (a,b) zoom the delivery axis to 0.9--1.005; panel (c) uses 0--1.01. Simultaneous group delivery is much stricter than average individual delivery.}\label{fig:reliability}
\end{figure*}

\subsection{Analytical group-risk floors and their interpretation}
Proposition~\ref{prop:groupbounds} permits a channel-model check that does not depend on the sampled receiver positions. For the main state transitions, the stationary bad-state fraction is $1/21\approx0.04762$. When every copy is erased by the common mechanism, all receivers fail regardless of the number of listening devices or their local channel quality. The corresponding unavoidable group-failure probabilities for $R=1,2,3$ are $0.0104762$, $0.00190571$, and $0.000380953$. These bounds decrease with repetition but do not account for any receiver-local erasure and hence systematically understate the actual group-failure probability whenever such losses occur.

This comparison clarifies why a per-receiver risk target cannot be transferred directly to the complete group. In the 64-receiver, three-copy setting, the simulated group failure is approximately $1-0.974777=0.025223$, much larger than the common-erasure floor alone. The gap is not a contradiction: failures of one or more heterogeneous local links add to events in which all copies are lost in common. Conversely, a common-outage process can keep mean receiver delivery high while causing simultaneous failure for many users. An adaptive rule based only on mean first-copy loss cannot distinguish these cases from marginal loss observations alone.

The bound is also an admission test rather than a forecast of temporal burst length. A stationary single-block probability does not determine the duration of consecutive missing blocks or the recovery time of a buffering system. The simulations retain longest missing-burst statistics for this reason, while the absence of an audio queue prevents those bursts from being relabeled as audible dropouts. The combination of group delivery, burst duration, and deadline accounting is needed to decide whether extra copies are useful under a specified playback objective.

\subsection{Airtime, activity, and the limits of fairness summaries}
Fig.~\ref{fig:resources} separates four quantities that are often conflated. With fixed repetition, source data airtime is receiver-independent in this abstraction. Three copies of 112~bytes consume about 423.96~ms of source data airtime per second, while their reserved slots consume 900~ms/s before the separate beacon reserve. Source data airtime excludes feedback and is lower than reserved service time because the latter includes assumed non-radio work and margin.

The aggregate receiver audio-on proxy scales linearly with $N$. For 64 receivers and three copies it is 57,600 receiver-ms/s: the sum of 900~ms/s over 64 receivers, not a physically impossible duty cycle of one radio. Each receiver's energy additionally depends on hardware state currents, codec/DSP activity, feedback, synchronization, and localization. We deliberately provide no invented milliwatt or battery-lifetime values. The word ``free'' is appropriate only to the narrow marginal sender-data-airtime property for fixed coding/repetition, not to total system cost.

The delivery fairness index is close to one for all policies. For fixed-three at $N=64$, it is approximately 0.99999988, yet all-receiver delivery is only 0.97478. An outage affecting every receiver equally gives perfect equality but poor service. Accordingly, fairness should be reported with worst-receiver reliability, shared-outage frequency, service admission, and user-level utility rather than as a standalone success criterion. Location-based zoning would introduce an additional distinction between intentionally excluded and unfairly underserved users; that service layer is not implemented here.

\begin{figure*}[t]
\centering\includegraphics[width=0.92\textwidth]{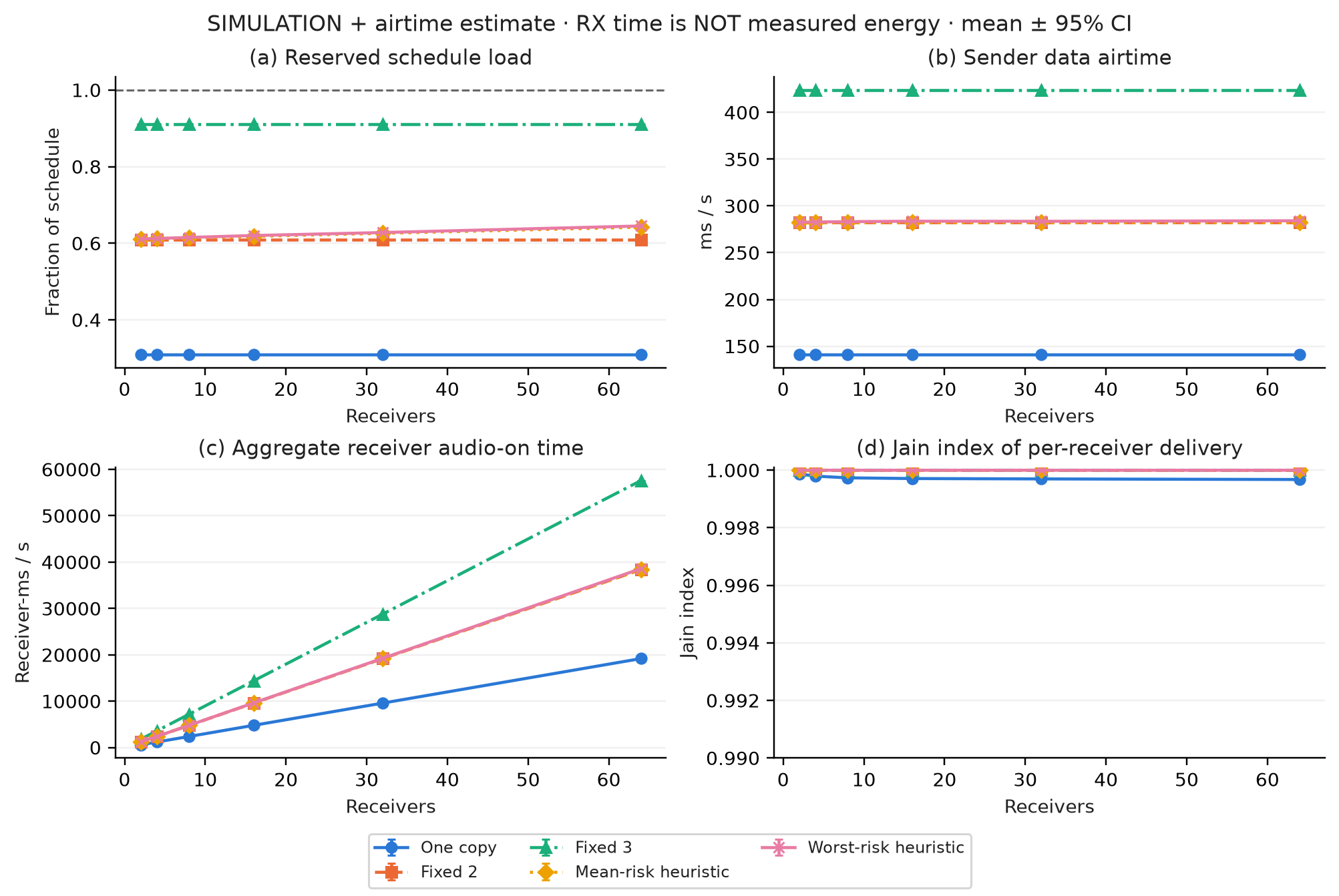}
\caption{Simulation and documented-airtime estimates. Reserved schedule load is not on-air duration; summed receiver activity is not measured energy. Jain-style delivery equality can remain high despite common failures. Error bars summarize 30 runs. No current, power, or lifetime measurements are present.}\label{fig:resources}
\end{figure*}

\subsection{A shared timing--reliability feasibility boundary}
The two simulation families reveal different reasons a group service can miss its objectives, and neither compensates for the other. In the baseline timing model, the single-sample output-phase quantum already consumes $20.833\us$ of the sufficient worst-case $30\us$ span allowance. A centimetre-scale location error or motion over a stale sensing interval can exhaust the remaining margin without any increase in radio packet loss. In the baseline 64-receiver transport model, fixed-three improves complete-block delivery but reserves $0.9$ of each audio-block interval for the three application copies, before the separate beacon reserve. Adding receiver reports, positioning exchanges, or longer slot margins reduces scheduling slack.

These observations define a conditional decision sequence: compare modeled actuator resolution with the required electronic timing; determine whether geometry and information age fit the remaining acoustic allowance; evaluate fixed-copy group risk for the specified channel parameters; and test reserved slot demand against the schedule budget and playback horizon. Satisfaction of only the first or third condition is insufficient. The numerical results quantify separate timing, reception, and reservation constraints rather than a jointly sampled closed-loop success probability.

\subsection{Failure regimes: common fades and overload}
Fig.~\ref{fig:failure} preserves unfavorable regimes. At $N=16$, raising bad-state common erasure probability from 0.2 to 0.6 reduces fixed-three worst-receiver delivery from about 0.99783 to 0.98576; worst-risk falls from about 0.99133 to 0.97767. The nominal 0.01 risk proxy is not a guaranteed constraint. Clipping repetitions at three, updating only every 0.2~s, and extrapolating an IID marginal all contribute to the failure. This illustrates the sensitivity of deadline-dependent adaptation to hidden channel state and confidence in loss estimates, rather than a guaranteed benefit from increasing control complexity.

Capacity imposes an independent limitation. At the main 150~$\mu$s slot assumption, fixed-three load is 0.91 including the beacon reserve. At 200~$\mu$s it is 1.21, and at 250~$\mu$s it is 1.51. When reserved-slot length alone changes, the model reuses the same stochastic loss mechanism, so the radio-block success results do not change; those outcomes must not be interpreted as feasible audio delivery above unit load. Even a favorable mean load can hide short reporting intervals or high-copy epochs that violate instantaneous scheduling. The complete outputs retain maximum run load and overload fraction.

These failure regimes clarify the difference between a mathematically schedulable configuration and an overloaded one. The assumed slot margin is a design input rather than a guaranteed service capability, and correlated loss cannot be replaced by an optimistic IID packet count. When a fixed-copy policy satisfies an assumed reliability and deadline budget with smaller feedback overhead, the adaptive rule has no demonstrated advantage under that budget. The results support resource-aware design choices and do not imply categorical superiority of UWB over LE Audio.

\begin{figure*}[t]
\centering\includegraphics[width=\textwidth]{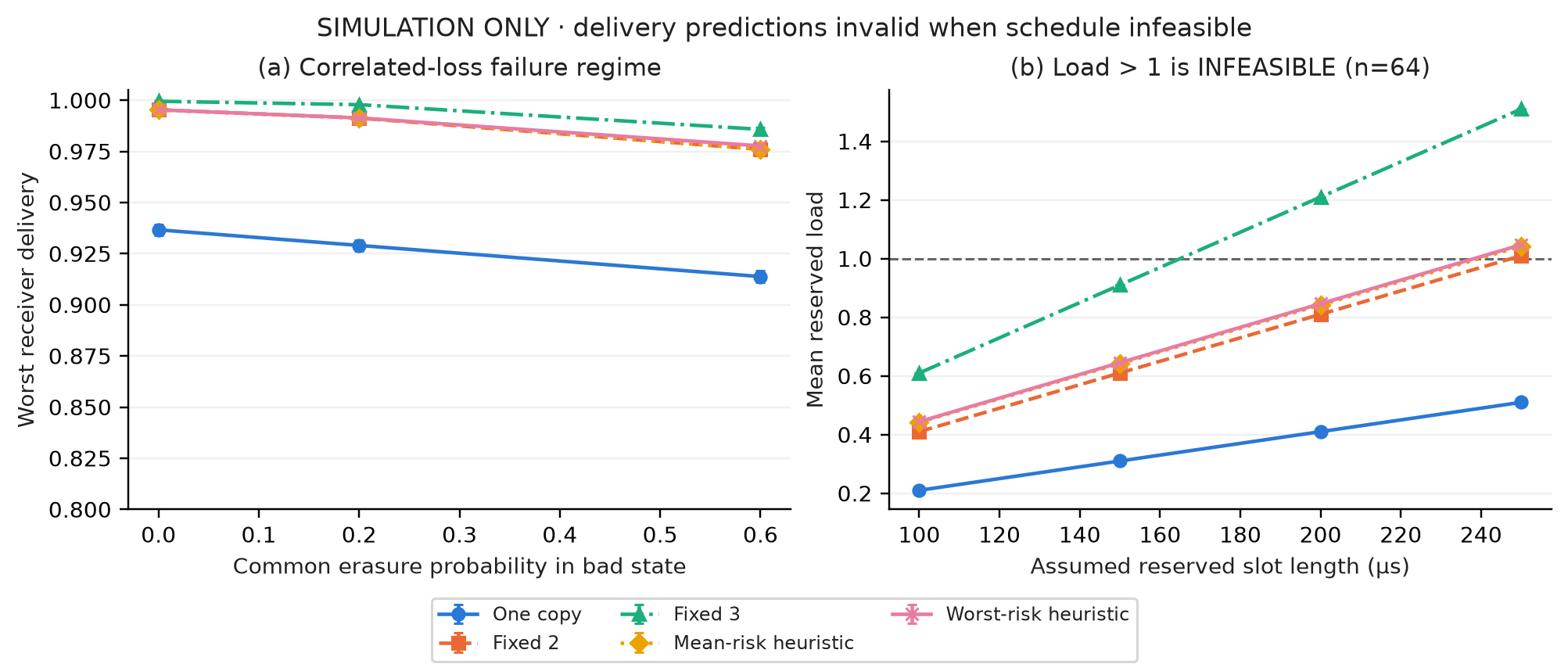}
\caption{Simulation-only failure regimes. Left: stronger shared fades expose the lack of a correlated-risk guarantee. Right: larger assumed reserved slots create offered load above one; such points are infeasible and do not predict real-time delivered audio. Thirty independent runs per setting.}\label{fig:failure}
\end{figure*}

\section{Discussion and Design Implications}\label{sec:discussion_main}
\subsection{A Common Feasibility Region}
The central implication of the analysis is that multicast-audio performance is determined by the intersection of several constraints, not by a single wireless metric. The clock estimator controls the temporal relation between source and receiver clocks; the output stage determines whether the estimated deadline becomes a specified sample event; the position estimator controls the speaker-dependent propagation offset; and the transport schedule determines whether the correct block is available early enough to render. These variables interact through deadlines but are observed at different layers. A low packet-error ratio therefore does not guarantee acoustic alignment, just as a low clock residual does not guarantee that an audio block was received in time. Equations~\eqref{eq:bound}, \eqref{eq:causal}, and~\eqref{eq:load} give distinct checks on a single operating point.

A timing bound and a delivery bound have different logical meanings. Proposition~\ref{prop:budget} is a sufficient statement conditional on bounded timing and geometry errors. It can certify that a specified spread is not exceeded within those premises, although failing the bound does not establish an actual timing failure because the bound can be conservative. In contrast, the common-erasure lower bound in Proposition~\ref{prop:groupbounds} is a necessary test: a target that lies below the shared-loss floor is impossible under the channel model regardless of the quality of the local receiver channels. The union bound is sufficient for group admission but may be loose when multiple receivers fail in the same common fade. Comparing these logical directions prevents a necessary test from being presented as a guarantee or a sufficient test from being interpreted as an exact prediction.

The resulting joint region is particularly informative at high receiver counts. Fixed-copy multicast uses a common source-data service for all receivers, so its downlink reservation does not grow linearly with $N$. However, the event that every receiver obtains a particular block becomes more demanding as receivers are added. The number of reports required by a feedback strategy also increases with $N$ under the assumed per-receiver return schedule. Hence receiver-independent downlink service, group reliability, and feedback scalability are separate properties. The multicast advantage in repeated content distribution does not eliminate the need to budget the reliability and control terms.

\subsection{Timing and Spatial Interpretation}
For the specified one-listener objective, source-time emission is deliberately shifted to offset unequal distances. The correction changes the variable of interest: electrical outputs need not coincide after compensation, but direct-path arrivals can approach one another at the chosen listener. This distinction is essential when interpreting the timing tables. An uncorrected electrical span and a corrected acoustic span quantify different objectives and cannot be compared as if both were residual clock jitter. Likewise, a sub-sample numerical residual does not imply that a sample-addressable physical output stage has the same resolution; quantization and output-pipeline uncertainty must be represented in the timing budget.

The bound also exposes the marginal value of additional sensing accuracy. Suppose a target arrival spread is $S^*$ and the actuator contributes $q_D$. After assigning $2E_c$ to clock uncertainty, the remaining allowable one-way distance error under the symmetric deterministic budget is $c(S^*-2E_c-q_D)/2$. This allowance is meaningful only if nonnegative. If clock and actuation already consume the entire bound, more precise position coordinates cannot establish the desired guarantee. Conversely, once the clock contribution is small, position-age error or actuation granularity may dominate. These are structural properties of the propagation and output mappings, independent of how the coordinates were obtained.

Sensor age is not identical to a fixed application update interval. The effective age at playback reflects when the source position was observed relative to the output event, including any estimation and processing delay. Under the bounded-speed model, age $a$ introduces a worst-case displacement contribution $va$ even if the underlying coordinate observation were exact. An age-reduction policy can therefore be more beneficial than reducing instantaneous coordinate noise in a rapidly changing geometry. A compensator designed for one listener does not in general minimize error for a second listener at another position; the corresponding path-length differences define distinct geometric objectives. Reverberation, reflections, transducer phase, and frequency-dependent directivity are also outside the direct-path arrival metric used here and are not implicitly summarized by it.

At 48~kHz, one sample occupies approximately $20.833\us$. If this full quantum is assigned to the actuator term and the study threshold is $30\us$, only $9.167\us$ remains in the sufficient worst-case budget for all clock and distance uncertainties. Under zero clock error, the symmetric per-path distance allocation is approximately $1.57$~mm. This is a conditional budget calculation rather than a localization specification: other choices of interpolation, actuator control, or probabilistic rather than deterministic criteria lead to other allowances. Its significance is that finer radio timing alone does not eliminate uncertainty introduced downstream of the radio.

\subsection{Correlated Loss, Repetition, and Group Size}
The shared-loss mechanism is decisive because one common erased copy simultaneously affects every receiver. Under a fixed channel state, repetitions can overcome local erasures, but state persistence produces bursts whose duration is not reflected in an unconditional single-copy loss probability. The stationary fixed-repetition expression conditions on the common loss process before combining receiver outcomes. Averaging the losses first and then raising the marginal loss to the repetition power implicitly imposes a different temporal dependence structure. This ordering can materially overstate reliability when the bad state persists over several copy opportunities.

The group metric also changes the interpretation of apparently excellent individual reception. If every receiver is required to render the same block, the service event is an intersection over receivers rather than an average over them. A Jain-style equality statistic can approach one when all receivers suffer the same missing blocks, while the useful group service remains poor. In systems that tolerate partial attendance or receiver-specific content, a different service metric is appropriate; in synchronized common-block playback, simultaneous delivery is the relevant quantity. The analysis separates this requirement from fairness, perceived audio quality, and buffer-concealment behavior, none of which follows from the block-delivery event alone.

Repetition policy should be interpreted jointly with reserved load and deadline slack. Choosing a larger copy count increases packet opportunities in the loss model but reserves additional slots whether or not earlier copies already succeeded. A per-receiver report can support adaptation, yet its reserved return-channel cost rises with population size. A mean-risk rule inferred from marginal losses is especially vulnerable to state-correlated failures: the estimator may select a count that looks adequate under an independent-erasure proxy while the common-outage floor remains above the target. The numerical comparison therefore favors neither adaptability nor fixed repetition as a general principle; it shows which configured policies meet or miss explicit conditional objectives at their respective resource costs.

\subsection{Schedule Capacity and Causal Playback}
A reserved-slot utilization exceeding one is a feasibility contradiction within the specified service calendar. It is not simply a performance point at which a queue would experience a small additional delay. If all stated reservations must be honored in the same interval and their combined duration exceeds the interval, the assumed nonoverlapping schedule cannot exist. For utilization below one, feasibility of the aggregate reservation is necessary but not sufficient for deadline satisfaction: placement of repetitions, timing beacons, reporting opportunities, decoding, and output preparation still matters. This is why a sufficient delivery probability for a block is not interchangeable with successful presentation at the block's scheduled acoustic arrival time.

The causal condition~\eqref{eq:causal} formalizes the distinction between receiving a block eventually and receiving it early enough to render. Extra copies near the end of the playback window can improve an unconstrained reception statistic but provide no useful service for a missed presentation instant. Similarly, transmitting a timing or position update consumes time that could otherwise be used for audio service; its value is determined by whether the reduction in uncertainty outweighs the schedule and age penalties. A cross-layer design that ignores the execution horizon can optimize the wrong deadline. A coherent formulation retains the presentation event, not packet reception, as the terminal timing boundary.

\subsection{Interpretation and Scope of the Numerical Evidence}
All reported confidence intervals quantify variation over independent Monte Carlo runs at the specified parameter values. The repeated policies within a seed use common candidate events and should be treated as paired comparisons rather than unrelated experiments. The clock and transport model families are evaluated separately; their output statistics are marginals under distinct stochastic processes. In the absence of a specified joint distribution of delay, erasure, resampling, and playout-buffer events, multiplying those marginals would not yield a defensible end-to-end audio success probability. The study instead reports the separately identifiable operating limits and their analytical intersection.

The geometry model includes one designated listening point, known speaker positions, bounded listener speed, fixed acoustic propagation speed, and stochastic external position estimates. The channel model uses a two-state common erasure process and conditionally local losses. These assumptions permit controlled sensitivity studies and exact conditional expressions, but the confidence intervals do not measure uncertainty about model structure or the distribution of different rooms and radios. Likewise, calculated radio airtime and reserved receiver activity are distinct from energy consumption; their interpretation follows the explicitly defined accounting boundary. Published Bluetooth isochronous behavior provides context, not a matched technology ranking, because the numerical policies do not implement a standards-complete comparison stack.

The main transferable result is thus a set of checks for service claims. A configuration should be characterized by the precision and identifiability of its time observations, its output-phase actuation resolution, its spatial and information-age assumptions, the group event used to define reliability, and its service-calendar utilization. None of these terms can be replaced by another layer's headline metric. For multi-speaker audio, this accounting makes it possible to state which error mechanism limits a declared objective, which reliability targets are excluded by common outages, and when the apparent benefit of added redundancy is incompatible with the available schedule.

\section{Conclusion}\label{sec:conclusion}
This paper formulated position-assisted UWB multicast audio as a joint timing, group-reliability, and schedule-feasibility problem. The analysis separated clock mapping and output actuation from geometric propagation correction, and derived a conditional one-listener arrival-span budget. For state-correlated common erasures and heterogeneous receiver-local losses, the stationary all-receiver delivery expression and complementary probability bounds distinguished reliable group service from receiver-average packet reception. A causal output constraint and reserved-slot load expression made explicit the resource and deadline implications of repeated broadcast transmission.

Across reproducible numerical experiments, affine clock compensation reduced electronic spread but did not by itself establish the stipulated acoustic alignment target. Position-error age and output granularity were decisive for the one-listener correction, while common fades, receiver population, and repetition budgets limited simultaneous block delivery. The resulting design principle is to assess a shared audio transport against declared presentation-time, all-receiver delivery, and resource constraints together. This characterization provides a coherent basis for interpreting multicast audio architectures without conflating radio packet reception, scheduled electrical output, and acoustic arrival.

\bibliographystyle{IEEEtran}
\bibliography{references/references}
\end{document}